\definecolor{vertFonce}{rgb}{0,0.5,0}
\definecolor{numLignes}{rgb}{0.17,0.57,0.7}	
\definecolor{gris}{rgb}{0.5,0.5,0.5}
\definecolor{grisFonce}{rgb}{0.2,0.2,0.2}
\definecolor{orange}{rgb}{1,0.65,0.31}		
\definecolor{orangeFonce}{rgb}{1,0.4,0}
\definecolor{bleuFonce}{rgb}{0,0,0.4}
\definecolor{rougeFonce}{rgb}{0.3,0,0}
\definecolor{rougeWord}{rgb}{0.5,0,0}
\definecolor{vertClair}{rgb}{0.8,1,0.8}
\definecolor{rougeClair}{rgb}{1,0.5,0.5}
\newtheorem{thm}{Theorem}
\newtheorem{cor}{Corollary}[section]
\newtheorem{prop}{Proposition}[section]
\newtheorem{remark}{Remark}[section]
\newenvironment{system}{%
	\equation\left\{\ \begin{aligned}%
}{%
	\end{aligned} \right. \endequation%
}
\newenvironment{system*}{%
	\equation\nonumber\left\{\ \begin{aligned}
}{%
	\end{aligned} \right. \endequation%
}
\newcommand{\step}[1]	{\paragraph{\itshape #1.}}
\newcommand		{\subsetArrow}	{\mathrel{\ooalign{$\subset$\cr%
\hidewidth\raise-.087ex\hbox{$_\shortrightarrow\mkern-1.5mu$}\cr}}}
\newcommand		{\subsetarrow}	{\mathrel{\ooalign{$\subset$\cr%
\hidewidth\raise-1.45ex\hbox{$\vec{}\mkern6mu$}\cr}}}
\newcommand		{\N}		{\mathbb N}			
\newcommand		{\ZZ}		{\mathbb Z}			
\newcommand		{\Z}		{\ZZ}
\newcommand		{\RR}		{\mathbb R}			
\newcommand		{\R}		{\RR}
\newcommand		{\Rd}		{\R^d}
\newcommand		{\Rdd}		{\R^{2d}}
\newcommand		{\CC}		{\mathbb C}			
\newcommand		{\cH}		{\mathcal H}		
\newcommand		{\cD}		{\mathcal D}		
\renewcommand	{\L}		{\mathcal L}		
\newcommand     {\cW}		{\mathcal W}		
\newcommand		{\cB}		{\mathcal B}
\newcommand		\sfH		{\mathsf H}			
\newcommand		\sfT		{\mathsf T}			
\newcommand		{\lt}			{\left}				%
\newcommand		{\rt}			{\right}			%
\renewcommand	{\(}			{\lt(}
\renewcommand	{\)}			{\rt)}
\newcommand		{\set}[1]		{\lt\{#1\rt\}}
\newcommand		{\com}[1]		{\lt[{#1}\rt]}		
\newcommand		{\n}[1]			{\lt|{#1}\rt|}
\newcommand		{\nrm}[1]		{\lt\|{#1}\rt\|}
\newcommand		{\snrm}[1]		{\lVert #1\rVert}
\newcommand		{\bnrm}[1]		{\big\lVert #1\big\rVert}
\newcommand		{\Bnrm}[1]		{\Big\lVert #1\Big\rVert}
\newcommand		{\Nrm}[2]		{\nrm{#1}_{#2}}
\newcommand		{\sNrm}[2]		{\snrm{#1}_{#2}}
\newcommand		{\bNrm}[2]		{\bnrm{#1}_{#2}}
\newcommand		{\BNrm}[2]		{\Bnrm{#1}_{#2}}
\newcommand		{\indic}	{\mathds{1}}		
\renewcommand		{\d}		{\mathop{}\!\mathrm{d}}		
\newcommand			{\Dx}		{\nabla_x}
\newcommand			{\Dv}		{\nabla_\xi}
\newcommand			{\conj}[1]	{\overline{#1}}		
\DeclareMathOperator{\re}		{Re}				
\DeclareMathOperator{\tr}		{Tr}				
\renewcommand	{\Re}[1]		{\re\!\( #1 \)}		
\newcommand		{\Tr}[1]		{\tr\!\( #1 \)} 	
\newcommand		{\intd}			{\int_{\R^d}}
\newcommand		{\intdd}		{\int_{\R^{2d}}}
\newcommand		{\iintd}		{\iint_{\R^{2d}}}
\newcommand		{\loc}			{\mathrm{loc}}
\newcommand		{\eps}			{\varepsilon}
\newcommand		{\cC}			{\mathcal{C}}
\newcommand		{\Inprod}[2]	{\Braket{#1 | #2}}
\newcommand		{\op}		{\boldsymbol{\rho}}	
\newcommand		{\opnu}		{\boldsymbol{\nu}}	
\newcommand		{\tildop}		{\,\tilde{\!\op}}	
\newcommand		{\ttildop}		{\,\tilde{\tilde{\!\op}}}	
\newcommand		{\opp}		{\boldsymbol{p}}
\newcommand		{\Dh}		{\boldsymbol{\nabla}}	
\newcommand		{\Dhx}[1]	{\Dh_{\!x} #1}			
\newcommand		{\Dhv}[1]	{\Dh_{\!\xi} #1}		
\DeclareMathOperator{\dG}	{\d\Gamma}			
\newcommand		{\CS}	{\cC^\mathrm{S}}
\title[\textsc{Regularity of Projection Operators}]{\Large Optimal Semiclassical Regularity of Projection Operators and Strong Weyl Law}
\author[\textsc{L.~Lafleche}]{\large\textsc{Laurent Lafleche}}
\address{Institut Camille Jordan, UMR 5208 CNRS \\\& Université Claude Bernard Lyon 1, France}
\curraddr{\textsc{Unit\'e de Math\'ematiques pures et appliqu\'ees, \'Ecole Normale Supérieure de Lyon, allée d’Italie, 69364 Lyon, France}}
\email{laurent.lafleche@ens-lyon.fr}
\subjclass[2020]{81Q20 $\cdot$ 81S30 $\cdot$ 47A30 (46N50, 46E35).}
\keywords{Semiclassical limit, Weyl law, trace inequalities, commutator estimates}
\begin{document}

\begin{abstract}\small\vspace{10pt}
	Projection operators arise naturally as one-particle density operators associated to Slater determinants in fields such as quantum mechanics and the study of determinantal processes. In the context of the semiclassical approximation of quantum mechanics, projection operators can be seen as the analogue of characteristic functions of subsets of the phase space, which are discontinuous functions. We prove that projection operators indeed converge to characteristic functions of the phase space and that in terms of quantum Sobolev spaces, they exhibit the same maximal regularity as characteristic functions. This can be interpreted as a semiclassical asymptotic on the size of commutators in Schatten norms. Our study answers a question raised in [J.~Chong, L.~Lafleche, C.~Saffirio, arXiv:2103.10946 [math.AP]] about the possibility of having projection operators as initial data. It also gives a strong convergence result in Sobolev spaces for the Weyl law in phase space.
\end{abstract}

\begingroup
\def\uppercasenonmath#1{} 
\let\MakeUppercase\relax 
\maketitle
\endgroup

\bigskip


\renewcommand{\contentsname}{\centerline{Contents}}
\setcounter{tocdepth}{2}	
\tableofcontents

\section{Introduction}

	Projection operators arise naturally in quantum mechanics, the simplest example being the projection operator $\op = \ket{\psi}\bra{\psi}$ associated to a wave function $\psi : \Rd\to \CC$ verifying $\intd \n{\psi}^2 = 1$, which is defined by
	\begin{equation*}
		\op \varphi(x) = \psi(x) \intd \conj{\psi(y)}\,\varphi(y)\d y
	\end{equation*}
	for any $\varphi\in L^2(\Rd)$. More generally, a projection operator can be defined as an operator $\op$ verifying $\op^2 = \op$. In this paper, we will consider compact operators acting on $L^2(\Rd)$, and we denote by $\L^\infty$ the set of such operators. It is not difficult to see that such an operator is automatically trace class, and so if $\op$ is self-adjoint, it can be thought of as a density operator representing the state of a quantum system. 
	
	In the classical limit, that is in units where the Planck constant $h = 2\pi\,\hbar$ becomes negligible, the classical analogue of a density operator is a phase space density $f(x,v)$ representing the probability to find a particle at position $x\in\Rd$ with velocity $v\in\Rd$. With this point of view, projection operators can be thought of as the quantum analogue of functions verifying $f^2 = f$, that is characteristic functions. It is well-known (see e.g.~\cite{sickel_regularity_2021}) that characteristic functions cannot be infinitely regular, and the maximal regularity allowed for test functions can be formulated by saying that such functions are in the Besov space $B^s_{p,\infty}$ at most for $s \leq 1/p$. It is the goal of this paper to prove that an analogue of this property holds true for projections operators.
	
	\subsection{Phase space quantum mechanics} To make the analogy between density operators and phase space functions more precise, it is typical to introduce the Weyl quantization, which by analogy with the Fourier inversion formula associates to the function $f$ the operator
	\begin{equation}\label{eq:Weyl_def_0}
		\op_f := \intdd \widehat{f}(y,\xi) \,e^{2i\pi\(y\cdot x + \xi\cdot \opp\)}\d y \d\xi
	\end{equation}
	with $x$ identified with the operator of multiplication by $x$ and $\opp = -i\hbar\nabla$ the quantum analogue of the momentum, where $\nabla$ denotes the gradient with respect to the $x$ variable and $\hbar = h/(2\pi)$ is the Planck constant. The integral kernel of this operator is then given by
	\begin{equation*} 
		\op_f(x,y) = \intd e^{-2i\pi\(y-x\)\cdot\xi} \, f(\tfrac{x+y}{2},h\xi)\d\xi.
	\end{equation*}
	One can also look at the inverse operation called the Wigner transform, that associates to an operator $\op$ a function on the phase space
	\begin{equation*} 
		f_{\op}(x,\xi) = \intd e^{-i\,y\cdot\xi/\hbar} \,\op(x+\tfrac{y}{2},x-\tfrac{y}{2})\d y.
	\end{equation*}
	Noticing for example that at least formally $h^d\tr(\op_f) = \intdd f(x,\xi)\d x\d\xi$ where $\tr$ denotes the trace, and that $h^d\tr(|\op_f|^2) = \intdd \n{f(x,\xi)}^2\d x\d\xi$, where $\n{A} = \sqrt{A^*A}$ denotes the absolute value of an operator, it is natural to consider the following scaled Schatten norms
	\begin{equation}\label{eq:def_norm}
		\Nrm{\op}{\L^p} = h^{\frac{d}{p}} \Nrm{\op}{p} = h^{\frac{d}{p}} \Tr{\n{\op}^p}^\frac{1}{p},
	\end{equation}
	that are the quantum version of the phase space Lebesgue norms. On another side, the correspondence principle leads to define the quantum analogue of the gradients in the phase space by the following formulas
	\begin{equation}\label{eq:quantum_gradients}
		\Dhx \op := \com{\nabla,\op} \quad \text{ and } \quad \Dhv \op := \com{\frac{x}{i\hbar},\op}.
	\end{equation}
	These formulas can also be understood as the Weyl quantization of the classical phase space gradients since
	\begin{equation}\label{eq:weyl_quantization_gradients}
		\op_{\Dx f} = \Dhx{\op_f} \quad \text{ and } \quad \op_{\Dv f} = \Dhv{\op_f}.
	\end{equation}
	The uniform-in-$\hbar$ boundedness of these quantities in the scaled Schatten norms can thus be seen as the quantum analogue of the boundedness of a phase space function to a Sobolev space. We refer to \cite{lafleche_quantum_2022} for a more detailed presentation of these ideas and their applications.
	
\subsection{Motivation: Slater determinants and semiclassical mean-field limit}

	We are interested by one-particle density matrices that are projections operators. These states appear for instance when considering the one-particle reduced density of a Slater determinant. Recall that a Slater determinant $\omega_N$ can be defined as the $N$-body wave function
	\begin{equation}\label{eq:Slater}
		\Psi_N(x_1,\dots,x_N) := \frac{1}{\sqrt{N!}} \det(\psi_j(x_k))_{(j,k)\in\set{1,\dots,N}^2}
	\end{equation}
	or the associated density operator
	\begin{equation*}
		\omega_N = \ket{\Psi_N}\bra{\Psi_N}
	\end{equation*}
	where $(\psi_1, \dots,\psi_N)$ is an orthonormal family of $L^2(\Rd)$. Its one-particle reduced density, or first marginal, is then of the form
	\begin{equation*}
		\omega = N \tr_{2,\dots,N}(\omega_N) = \sum_{j=1}^{N} \ket{\psi_j}\bra{\psi_j}
	\end{equation*}
	if we choose the normalization $\Tr{\omega} = N$. In particular, it remains a projection, i.e. it verifies $\omega^2 = \omega$. Reciprocally, to any self-adjoint one-particle density operator $\omega$ verifying 
	\begin{equation}\label{eq:def_pure_state}
		\Tr{\omega} = N \quad \text{ and } \quad \omega^2 = \omega
	\end{equation}
	one can associate a Slater determinant~\eqref{eq:Slater} by the spectral theorem. To be compatible with the Weyl quantization and the Wigner transform and see what is happening when units are chosen so that $\hbar$ becomes negligible, or equivalently in the classical limit $\hbar\to 0$, we define $\op = (Nh^d)^{-1}\omega$ so that $h^d\Tr{\op} = \iintd f_{\op} = 1$, and we assume that
	\begin{equation*}
		\Nrm{f_{\op}}{L^2(\Rdd)} = \cC_2 < \infty
	\end{equation*}
	converges to a constant when $\hbar\to 0$. By the properties of the Wigner transform it holds
	\begin{equation*}
		\cC_2^2 = h^d\Tr{\op^2} = \frac{1}{N^2 h^d} \Tr{\omega^2} = \frac{1}{N^2 h^d}\Tr{\omega} = \frac{1}{N h^d}.
	\end{equation*}
	Therefore, $N$ and $h$ are linked through the relation $N h^d = \cC_2^{-2}$. This is in contrast with the case of fermionic mixed states where more generally, it is possible to have $N\,h^d \leq \cC_2^{-2}$ (see \cite{chong_many-body_2021}). Observe additionally that it follows from Equation~\eqref{eq:def_pure_state} that the operator norm of $\omega$ is given by $\Nrm{\omega}{\infty} = 1$. Hence, by definition, the operator norm of $\op$ is given by $\Nrm{\op}{\L^\infty} = \Nrm{\op}{\infty} = \frac{1}{N\,h^d} = \cC_2^2$, and is also independent of $\hbar$. To summarize $\cC_2^2 = (N h^d)^{-1}$ is independent of $\hbar$ and such that $\op = \cC_2^2\,\omega$. To simplify, we will just consider in the rest of the paper that $\cC_2 = 1$, so that the theorems would correspond to the case $h = N^{-1/d}$ and $\op^2 = \op = \omega$.
	
	One of the motivation of this investigation arises from the problem of the mean-field and semiclassical limit from the $N$-body Schr\"odinger equation to the Hartree--Fock and Vlasov equations when the interaction potential is singular, as studied in~\cite{porta_mean_2017, chong_many-body_2021}. In the second of these works, assumptions of \textit{semiclassical regularity} are made on the initial data in the sense of the quantum Sobolev spaces defined in Section~\ref{sec:Sobolev}, and we prove here that these assumptions are not compatible with projection operators, and so Slater determinants, as was conjectured in~\cite[Remark~4.3]{chong_many-body_2021}.
	
	Precise estimates on the commutator of operators with the operators $x$ and $\opp$ are also useful to understand the size of the self-distance of some pseudometrics in quantum optimal transport, which is related to the Wigner--Yanase Skew information (See~\cite{de_palma_quantum_2021-1, lafleche_quantum_2023}).

\section{Main results}

	Our results consider three main cases. The first result, Theorem~\ref{thm:noregu}, is a result about the lack of high regularity for projection operators and is valid for all projection operators. Then, Theorem~\ref{thm:regu} indicates that there are states verifying the maximal regularity allowed by the first theorem. Finally, Theorem~\ref{thm:regu_noregu} claims that there are however states for which the regularity is strictly lower than the maximal regularity. It also shows as an application that the regularity obtained for projection operators can lead to improvements in the statement of the Weyl law.
	
\subsection{Sobolev spaces}\label{sec:Sobolev}
	
	Denote by $z = (x,\xi) \in\Rdd$ the phase space variable. Then, in the classical setting, the homogeneous Sobolev space of order $1$ of functions on the phase space can be defined as the set of functions $f : \Rdd \to \R$ vanishing at infinity and for which the following norm is finite
	\begin{equation*}
		\Nrm{f}{\dot{W}^{1,p}(\Rdd)} = \Nrm{\nabla_z f}{L^p(\Rdd)}.
	\end{equation*}
	As usual, one can also define the corresponding non-homogeneous space by defining the norm $\Nrm{f}{W^{1,p}(\Rdd)} = \Nrm{f}{L^p(\Rdd)} + \Nrm{\nabla_z f}{L^p(\Rdd)}$. Analogously, the quantum Sobolev norms of order $1$ are defined by the formula
	\begin{equation*}
		\Nrm{\op}{\dot{\cW}^{1,p}} := \Nrm{\Dh\op}{\L^p}.
	\end{equation*}
	where $\Dh\op$ is the vector valued operator $(\Dhx\op,\Dhv\op)$ whose absolute value verifies $\n{\Dh\op}^2 = \n{\Dhx\op}^2 + \n{\Dhv{\op}}^2$. When $s\in(0,1)$, the quantum analogue of the fractional Gagliardo--Sobolev norms are defined by $\Nrm{\op}{\cW^{s,p}} = \Nrm{\op}{\L^p} + \Nrm{\op}{\dot{\cW}^{s,p}}$ in~\cite{lafleche_quantum_2022} where
	\begin{equation}\label{eq:Sobolev_frac_def}
		\Nrm{\op}{\dot{\cW}^{s,p}}^p :=  \gamma_{s,p}\, h^d\intdd \frac{\Tr{\n{\sfT_z\op - \op}^p}}{\n{z}^{2d+sp}}\d z,
	\end{equation}
	for some constant $\gamma_{s,p}>0$. Here, $\sfT_z$ is the quantum phase space translation operator defined by
	\begin{equation}\label{eq:translation_def}
		\sfT_{z_0}\op = e^{i\(\xi_0\cdot x -x_0\cdot\opp\)/\hbar}\, \op\, e^{i\(x_0\cdot\opp - \xi_0\cdot x\)/\hbar}.
	\end{equation}
	More generally when $s\in[0,2)$, the quantum Besov norms~\cite{lafleche_quantum_2022} are defined by $\Nrm{\op}{\cB^s_{p,q}} = \Nrm{\op}{\L^p} + \Nrm{\op}{\dot{\cB}^s_{p,q}}$ where
	\begin{equation}\label{eq:Besov_def}
		\Nrm{\op}{\dot{\cB}^s_{p,q}} := \BNrm{\frac{\Nrm{\sfT_{2z}\op - 2\,\sfT_z\op + \op}{\L^p}}{\n{z}^{s+2d/q}}}{L^q(\Rdd)}.
	\end{equation}
	We will write\footnote{More rigorously, we could define a space $\ell^\infty\cB^s_{p,q}$ of $\hbar$ dependent operators of the form $\op = (\op_\hbar)_{\hbar\in(0,1)}$ and define the norm as $\Nrm{\op}{\ell^\infty\cB^s_{p,q}} = \sup_{\hbar\in(0,1)} \Nrm{\op_\hbar}{\cB^s_{p,q}}$, so that the notation~\eqref{eq:semiclassical_space} would be replaced by $\op \in \ell^\infty\cB^s_{p,q}$. The sequence of inclusions~\eqref{eq:Besov_inclusions} should also be understood from this point of view.}
	\begin{equation}\label{eq:semiclassical_space}
		\op \in \cB^s_{p,q}
	\end{equation}
	whenever there exists a constant $C$ independent of $\hbar$ such that $\Nrm{\op}{\cB^s_{p,q}} < C$. These Besov norms yield a finer scale compared to Sobolev norms, as they verify for any $s\in(0,1)$, any real numbers $1< r < p < q$ and any $\eps > 0$ sufficiently small (see~\cite{lafleche_quantum_2022})
	\begin{equation}\label{eq:Besov_inclusions}
		\cB^{s + \eps}_{p,q} \subset \cB^{s}_{p,1} \subset \cB^{s}_{p,r} \subset \cW^{s,p} \subset \cB^{s}_{p,q} \subset \cB^{s}_{p,\infty} \subset \cB^{s - \eps}_{p,q}.
	\end{equation}
	
\subsection{Maximal regularity of projections}

	Our main result can be summarized by telling that if $\op$ is a compact projection operator in $\L^1$, then $\op \notin \cB^s_{p,q}$ whenever $s > 1/p$, or $s= 1/p$ and $q < \infty$. This corresponds to the red part in Figure~\ref{fig:1}.

	\begin{thm}\label{thm:noregu}
		Let $s\in[0,1]$, $(p,q)\in[1,\infty]^2$ and $\hbar$ be some sequence converging to~$0$. Then, if $(\op_\hbar)_{\hbar\in(0,1)}$ is a sequence of operators such that $\op_\hbar^2=\op_\hbar$ and $h^d\Tr{\op}=1$, the following holds.
		\begin{itemize}
		\item If the operators are self-adjoint, then
		\begin{align}\label{eq:Besov_noregu}
			\Nrm{\op_\hbar}{\dot{\cB}^s_{p,q}}\ &\to \infty &&\text{ if } s > 1/p \text{, or } s= 1/p \text{ and } q < \infty
			\\\label{eq:Sobolev_noregu}
			\Nrm{\op_\hbar}{\dot{\cW}^{s,p}} &\to \infty &&\text{ if } s \geq 1/p \text{ and } p > 1.
		\end{align}
		\item If the operators are not self-adjoint, then the result still holds if $p \geq \frac{2\,d}{d+s}$. If $p < \frac{2\,d}{d+s}$, then the result holds if $\op\in\L^{2+\eps}$ uniformly in $\hbar$ for some $\eps>0$.
		\end{itemize}
	\end{thm}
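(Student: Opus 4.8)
The plan is to argue by contradiction and, exactly as in the classical statement for characteristic functions, to reduce the whole theorem to a single lower bound on a \emph{modulus of continuity}. Using the inclusions~\eqref{eq:Besov_inclusions} to pass between the various $(s,p,q)$ — the critical cases being $\dot{\cB}^s_{p,\infty}$ with $s$ slightly above $1/p$ and $\dot{\cW}^{1/p,p}$ — it suffices to produce constants $c,C>0$ independent of $\hbar$ such that
\begin{equation*}
	\Nrm{\sfT_{2z}\op-2\,\sfT_z\op+\op}{\L^p}\;\geq\;c\,\n{z}^{1/p}\qquad\text{whenever}\quad C\hbar\leq\n{z}\leq c .
\end{equation*}
Indeed, inserting this into~\eqref{eq:Besov_def} and integrating $\n{z}^{-sq-2d}$ over the dyadic shell $C\hbar\leq\n z\leq c$ yields a contribution of size $\hbar^{-(s-1/p)q}$ when $s>1/p$ and of size $\log(1/\hbar)$ when $s=1/p$, both divergent; when $q=\infty$ one keeps only the single scale $\n z\simeq\hbar$, getting $\Nrm{\op}{\dot{\cB}^s_{p,\infty}}\gtrsim\hbar^{1/p-s}$. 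The fractional Sobolev cases~\eqref{eq:Sobolev_noregu} are obtained the same way through~\eqref{eq:Sobolev_frac_def}, and the endpoint $s=1$ follows from the $\dot{\cW}^{1,2}$ estimate below together with interpolation in $p$.

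The mechanism is already visible, and isolated, when $p=2$ and $s=1$, where the quantum norm is explicitly computable: since $\Dhv\op=\tfrac1{i\hbar}\com{x,\op}$ and $\Dhx\op=\tfrac i\hbar\com{\opp,\op}$, for a self-adjoint projection one checks, using $\op^2=\op=\op^*$, that $\Nrm{\com{x_k,\op}}{2}^2=2\,\Nrm{(1-\op)x_k\op}{2}^2$ and likewise for $\opp_k$, so that $\Nrm{\op}{\dot{\cW}^{1,2}}^2=\tfrac{2h^d}{\hbar^2}\sum_{k}\bigl(\Nrm{(1-\op)x_k\op}{2}^2+\Nrm{(1-\op)\opp_k\op}{2}^2\bigr)$. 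To bound the right-hand side from below, insert $1=\op+(1-\op)$ on both sides of $\op\com{x_k,\opp_k}\op=i\hbar\,\op$, take the trace, and discard the "diagonal" term, which is the trace of a commutator of finite-rank operators and therefore vanishes; what survives is $\hbar\,\Tr\op=2\,\mathrm{Im}\,\langle(1-\op)x_k\op,(1-\op)\opp_k\op\rangle_2$. Cauchy--Schwarz, the arithmetic--geometric mean inequality and the normalisation $h^d\Tr\op=1$ then give $\sum_{k=1}^d\bigl(\Nrm{(1-\op)x_k\op}{2}^2+\Nrm{(1-\op)\opp_k\op}{2}^2\bigr)\geq d\,\hbar\,\Tr\op$, hence $\Nrm{\op}{\dot{\cW}^{1,2}}^2\geq 2d/\hbar\to\infty$. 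This single inequality — the uncertainty-principle obstruction to packing $N=h^{-d}$ orthonormal states into a phase-space region of volume one — is the one genuinely new ingredient, and it already realises the announced interpretation as a semiclassical lower bound on the size of commutators with a projection in Schatten norm.

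For the remaining $(s,p)$ the same phenomenon must be localised at each dyadic scale $\n z=\rho\in[C\hbar,c]$. After conjugating by the translation unitary one works with the symmetric second difference $\sfT_z\op-2\op+\sfT_{-z}\op=B_z-\sfT_{-z}B_z$, where $B_z=\sfT_z\op-\op$ is a difference of two orthogonal projections. The appearance of the exponent $1/p$ is the statement that a "clean interface" of phase-space volume $\rho$ has $\L^p$-norm $\rho^{1/p}$, not $\rho$: heuristically $B_z$ should behave like the Weyl quantisation of the indicator of $A\triangle(A+z)$ for a set $A$ of volume one, and so should carry of order $\rho\,h^{-d}$ singular values comparable to $1$. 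One reduces the $\L^p$-bound to the Hilbert--Schmidt scale by Schatten duality — testing against $B_z$ itself, using the identity $\Tr\bigl((\sfT_z\op-2\op+\sfT_{-z}\op)B_z\bigr)=\tfrac12\Nrm{\sfT_z\op-2\op+\sfT_{-z}\op}{2}^2$, the bounds $\mathrm{rank}(B_z)\lesssim h^{-d}$, $\Nrm{B_z}{\L^\infty}\leq1$, and, for self-adjoint $\op$, the fact that $\Nrm{\op}{\L^r}=1$ for every $r$ — so that everything comes down to the lower bound $\Nrm{B_z}{\L^2}^2=2\bigl(1-g(z)\bigr)\gtrsim\rho$ on the whole range, where $g(z)=h^d\Tr{\op\,\sfT_z\op}$ satisfies $g(0)=1$, $0\leq g\leq1$, $\int_{\Rdd}g=1$, together with the fact that the second difference is not much smaller than the first. \textbf{This is the main obstacle.} Those elementary properties of $g$ force the lower bound only at scale $\rho\simeq1$, while the $p=2$ computation forces it only at scale $\rho\simeq\hbar$; bridging the two requires a scale-by-scale refinement of the uncertainty estimate — equivalently, a proof that the Wigner symbol $f_{\op}$ inherits from the idempotent relation $f_{\op}\sharp f_{\op}=f_{\op}$ the same lower bound on its high-frequency content as a genuine characteristic function, uniformly down to frequency $\simeq\hbar^{-1}$ — and this is where I expect the real work to be.

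Finally, for projections that are not self-adjoint the equality $\Nrm{\op}{\L^r}=1$ is lost: the operator norm, and in fact every $\L^r$-norm, may be unbounded in $\hbar$, which is why that part of the statement is conditional. When $p\geq\frac{2d}{d+s}$, the Sobolev-conjugate exponent of the regularity assumption lies at or above $2$, so the quantum Sobolev embedding converts "$\op\in\dot{\cW}^{s,p}$ uniformly" into $\L^{p^\ast}$-control with $p^\ast\geq2$, which together with a Schatten interpolation supplies the missing input; in the complementary range one instead feeds the hypothesis $\op\in\L^{2+\eps}$ into the same interpolation. In every case self-adjointness enters only through the $L^2$-type identities and the $\L^2$-control, so the scheme closes once that control is restored.
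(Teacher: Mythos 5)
Your argument does not close, and you say so yourself: the entire theorem has been reduced to the lower bound $\Nrm{\sfT_{2z}\op-2\,\sfT_z\op+\op}{\L^p}\geq c\,\n{z}^{1/p}$ uniformly on the range $C\hbar\leq\n{z}\leq c$, and that bound is never proved. What you do establish are the two extreme scales only: the soft properties of $g(z)=h^d\Tr{\op\,\sfT_z\op}$ (namely $g(0)=1$, $0\leq g\leq 1$, $\int g=1$) force decorrelation only at scale $\n{z}\simeq 1$, and the commutator/uncertainty computation $\Nrm{\op}{\dot{\cW}^{1,2}}^2\geq 2d/\hbar$ lives at scale $\hbar$; the bridge across all intermediate dyadic scales is exactly the content of the theorem and is flagged by you as ``the main obstacle''. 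Worse, the pointwise-in-$z$ bound you ask for is stronger than what is true: a projection whose symbol is nearly periodic in some direction at an intermediate scale $\delta$ (with $\hbar\ll\delta\ll 1$) has $\sfT_z\op-\op$ small for those specific $z$, so any such bound could only hold on a large-measure subset of each shell, which again requires the missing quantitative input. Finally, the interpolation inequality~\eqref{eq:interpolation_ineq_Besov} only moves toward lower regularity, so divergence at the single point $(s,p)=(1,2)$ together with the flat bounds $\Nrm{\op}{\L^r}=1$ cannot reach the critical cases with $p>2$ (e.g.\ $\cB^{1/4}_{4,q}$); your ``interpolation in $p$'' step works only for $p<2$.

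The paper proves the theorem by a route that avoids any scale-by-scale estimate: assuming $\Nrm{\op_\hbar}{\dot{\cB}^s_{p,q}}$ bounded, a concentration-compactness argument on the translated Husimi transforms, combined with the semiclassical fact that squaring an operator is close to squaring its symbol (so that $\op^2=\op$ passes to the limit as $\mu^2=\mu$) and with quantum Sobolev/Besov bounds that rule out vanishing of the limit, produces a nonzero characteristic function $\mu\in B^s_{p,q}(\Rdd)$; this contradicts the classical regularity ceiling for indicator functions when $s>1/p$, or $s=1/p$ with $q<\infty$. The case $p<2$ is then reduced to $p\geq 2$ by the interpolation inequality~\eqref{eq:interpolation_ineq_Besov} against $\L^{p_0}$ norms, which is precisely where self-adjointness (via $\Nrm{\op}{\L^{p_0}}=\Nrm{\op}{\L^1}^{1/p_0}$) or the extra hypothesis $\op\in\L^{2+\eps}$ enters. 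Your $(s,p)=(1,2)$ commutator identity is a correct and pleasantly direct proof of that one special case, and your heuristic reduction is a reasonable program, but as it stands the general statement rests on an unproven (and, as stated pointwise in $z$, false) estimate.
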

	
	Using the above theorem, and more specifically Equation~\eqref{eq:Sobolev_noregu}, in the case of integer order of regularity gives the behavior of some commutators in Schatten norms.
	
	\begin{cor}
		For any $p>1$ and any sequence of self-adjoint projection operators $\op$ bounded in $\L^1$ uniformly in $\hbar$
		\begin{equation*}
			\frac{1}{\hbar} \Nrm{\com{x,\op}}{\L^p} + \Nrm{\com{\nabla,\op}}{\L^p} \underset{\hbar\to 0}{\to} \infty.
		\end{equation*}
		In the particular case when $\op\in\dot{\cB}^s_{p,q}$ for some $s>0$ and $p\geq 2$, then it follows from the proof of Proposition~\ref{prop:cv_pure_states} that both of the quantities in the above equation actually tend to $\infty$ separately.
	\end{cor}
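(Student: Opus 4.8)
The plan is to recognise the two commutators as the coordinate blocks of the quantum phase-space gradient and to reduce the first assertion to the case $s=1$ of Theorem~\ref{thm:noregu}. Recall from the introduction that $\Dhx\op=\com{\nabla,\op}$ and $\Dhv\op=\com{x/(i\hbar),\op}$, so that $\Nrm{\com{\nabla,\op}}{\L^p}=\Nrm{\Dhx\op}{\L^p}$ and $\tfrac1\hbar\Nrm{\com{x,\op}}{\L^p}=\Nrm{\Dhv\op}{\L^p}$. First I would record the elementary bound
\[
	\Nrm{\op}{\dot{\cW}^{1,p}}=\Nrm{\Dh\op}{\L^p}\le\Nrm{\Dhx\op}{\L^p}+\Nrm{\Dhv\op}{\L^p},
\]
which follows from the identity $\n{\Dh\op}^2=\n{\Dhx\op}^2+\n{\Dhv\op}^2$ together with the triangle inequality for $\Nrm{\,\cdot\,}{\L^{p/2}}$ when $p\ge2$, respectively the operator inequality $(A+B)^{p/2}\le A^{p/2}+B^{p/2}$ for $A,B\ge0$ when $1\le p\le2$, the common weight $h^{d/p}$ being harmless. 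Since $p>1$ gives $s=1\ge1/p$, estimate~\eqref{eq:Sobolev_noregu} of Theorem~\ref{thm:noregu} yields $\Nrm{\op}{\dot{\cW}^{1,p}}\to\infty$; hence the left-hand side of the displayed identity in the corollary, which dominates $\Nrm{\op}{\dot{\cW}^{1,p}}$, tends to $\infty$. (Throughout I use the normalisation $h^d\Tr{\op}=1$ under which Theorem~\ref{thm:noregu} is stated.)

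For the refined claim, add the hypothesis $\op\in\dot{\cB}^s_{p,q}$ with $s>0$ and $p\ge2$, and argue by contradiction: suppose that $\Nrm{\com{\nabla,\op}}{\L^p}=\Nrm{\Dhx\op}{\L^p}$ does not tend to $\infty$, hence stays bounded along a subsequence. Along a further subsequence, Proposition~\ref{prop:cv_pure_states} produces a measurable set $E$ with $\n{E}=1$ such that the Wigner transform $f_\op$ converges to $\Indic{E}$ in $L^2(\Rdd)$. Because $h^d\Tr{\op}=1$, the operator $\Dhx\op=\nabla\op-\op\nabla$ has rank $O(h^{-d})$, so Hölder's inequality for Schatten norms turns the bound on $\Nrm{\Dhx\op}{\L^p}$ into a bound on $\Nrm{\Dhx\op}{\L^2}$; by the $L^2$-isometry between operators and Wigner transforms together with~\eqref{eq:weyl_quantization_gradients}, this means that $\nabla_x f_\op=f_{\Dhx\op}$ is bounded in $L^2(\Rdd)$. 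Since $\nabla_x f_\op\to\nabla_x\Indic{E}$ in $\cD'(\Rdd)$, weak compactness forces $\nabla_x\Indic{E}\in L^2(\Rdd)$; a standard slicing argument then shows that for a.e.\ $\xi$ the slice $x\mapsto\Indic{E}(x,\xi)$ lies in $W^{1,2}_{\loc}(\Rd)$, is $\{0,1\}$-valued, and hence is a.e.\ constant, so that $E=\Rd\times F$ up to a null set, contradicting $0<\n{E}<\infty$. The identical argument applied to $\Dhv\op$, whose Wigner transform is $\nabla_\xi f_\op$, shows $\tfrac1\hbar\Nrm{\com{x,\op}}{\L^p}\to\infty$ as well.

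The first paragraph is routine algebra once~\eqref{eq:Sobolev_noregu} is granted. The substantive step, and the one requiring care, is the passage to the limit in the second paragraph: one has to be sure that a uniform $\L^p$ bound on a single coordinate block of $\Dh\op$ genuinely descends to an honest $L^2$ bound on the corresponding partial derivative of the limiting characteristic function, which is where the rank estimate for projections (converting the $\L^p$ bound into an $\L^2$ bound) and Proposition~\ref{prop:cv_pure_states} (identifying the limit as some $\Indic{E}$ with $\n{E}=1$) are both used. Granting that, the rigidity statement ``$\Indic{E}$ has an $L^2$ partial derivative in a coordinate block $\Rightarrow$ $E$ is a cylinder in those variables $\Rightarrow$ $\n{E}\in\{0,\infty\}$'' closes the argument.
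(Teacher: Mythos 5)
Your argument is correct and follows the route the paper intends: the first claim is the $s=1$, $p>1$ case of~\eqref{eq:Sobolev_noregu} once the two commutators are identified with $\Dhx{\op}$ and $\Dhv{\op}$ (under the normalization $h^d\Tr{\op}=1$, which you rightly flag as the one actually used in Theorem~\ref{thm:noregu}), and the second claim is deduced from Proposition~\ref{prop:cv_pure_states} exactly as the paper suggests, your rank-$O(h^{-d})$ conversion of the $\L^p$ bound into an $\L^2$ bound and the slicing/rigidity argument (a characteristic function with one coordinate block of its gradient in $L^2$ is a cylinder, hence of measure $0$ or $\infty$) being the natural way to make ``separately'' precise. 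The only imprecisions are cosmetic: Proposition~\ref{prop:cv_pure_states} gives convergence of the \emph{translated Husimi} transforms, strongly only in $L^p_{\loc}$, to some $\indic_E$ with $0<\n{E}<\infty$ (not necessarily $\n{E}=1$), but your argument uses only distributional convergence and $0<\n{E}<\infty$, so nothing is lost.
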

	
	Equation~\eqref{eq:Besov_noregu} actually also gives the semiclassical behavior of some commutators, even in the case of a non-integer order of regularity. This follows by noticing that $\sfT_z-1 = \sfT_{(x,0)}(\sfT_{(0,\xi)}-1) + \sfT_{(x,0)}-1$, and
	\begin{equation}\label{eq:Besov_vs_commutators}
		\n{\sfT_{(0,\xi)}\op - \op} = \n{e^{i\,x\cdot\xi/\hbar} \,\op \,e^{-i\,x\cdot \xi/\hbar} - \op} = \n{\com{e^{i\,x\cdot\xi/\hbar},\op}}
	\end{equation}
	and an analogous formula holds for $\sfT_{(x,0)}\op - \op$. This gives the following asymptotic result.
	
	\begin{cor}\label{cor:com_exp}
		With the same hypotheses as in Theorem~\ref{thm:noregu}, for any $s > 1/p$,
		\begin{equation*}
			\sup_{(y,\xi)\in\Rdd} \frac{1}{\hbar^s\n{\xi}^s} \Nrm{\com{e^{2i\pi\,x\cdot\xi},\op}}{\L^p} + \frac{1}{\hbar^s\n{y}^s} \Nrm{\com{e^{2i\pi\,y\cdot\opp},\op}}{\L^p} \to \infty.
		\end{equation*}
		As in the previous corollary, in the particular case when $\op\in\dot{\cB}^s_{p,q}$ for some $s>0$ and $p\geq 2$, both terms tend to $\infty$ separately.
	\end{cor}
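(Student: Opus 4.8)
The plan is to argue by contradiction and reduce the statement to the endpoint case $q=\infty$ of Theorem~\ref{thm:noregu}. First I would rewrite the quantity to be controlled in terms of phase-space translations: since the first summand depends only on $\xi$ and the second only on $y$, and since conjugation by a unitary preserves all Schatten norms, identity~\eqref{eq:Besov_vs_commutators} together with its analogue for $\sfT_{(x,0)}$, and the rescalings $\xi\mapsto h\xi$ and $y\mapsto h y$ in the suprema, show that the quantity in the statement equals $(2\pi)^s\,(A_\hbar+B_\hbar)$, where
\[
	A_\hbar:=\sup_{w\in\Rd}\frac{\Nrm{\sfT_{(0,w)}\op-\op}{\L^p}}{\n w^s}\qquad\text{and}\qquad B_\hbar:=\sup_{w\in\Rd}\frac{\Nrm{\sfT_{(w,0)}\op-\op}{\L^p}}{\n w^s}
\]
(the signs being irrelevant as the suprema range over all of~$\Rd$). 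It therefore suffices to prove that $A_\hbar+B_\hbar\to\infty$.

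Suppose this fails: then $A_\hbar+B_\hbar\leq C$ along some subsequence $\hbar\to 0$, so that $\Nrm{\sfT_{(0,\xi)}\op-\op}{\L^p}\leq C\n\xi^s$ and $\Nrm{\sfT_{(x,0)}\op-\op}{\L^p}\leq C\n x^s$ for all $x,\xi\in\Rd$, with $C$ independent of $\hbar$. The structural point I would use is that the conjugation action $z\mapsto\sfT_z$ is a genuine group representation, $\sfT_{z+z'}=\sfT_z\circ\sfT_{z'}$ — the cocycle of the Weyl operators is a scalar phase, which cancels under conjugation — so in particular $\sfT_{(x,\xi)}=\sfT_{(x,0)}\circ\sfT_{(0,\xi)}$, $\sfT_{2z}=\sfT_z\circ\sfT_z$, and each $\sfT_z$ acts isometrically on $\L^p$. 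From the first identity and the triangle inequality, $\Nrm{\sfT_z\op-\op}{\L^p}\leq\Nrm{\sfT_{(0,\xi)}\op-\op}{\L^p}+\Nrm{\sfT_{(x,0)}\op-\op}{\L^p}\leq C(\n\xi^s+\n x^s)\leq 2\,C\,\n z^s$ for every $z=(x,\xi)$, using $\n x,\n\xi\leq\n z$. Then $\sfT_{2z}\op-2\,\sfT_z\op+\op=\sfT_z(\sfT_z\op-\op)-(\sfT_z\op-\op)$, whence $\Nrm{\sfT_{2z}\op-2\,\sfT_z\op+\op}{\L^p}\leq 2\Nrm{\sfT_z\op-\op}{\L^p}\leq 4\,C\,\n z^s$. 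Dividing by $\n z^s$ and taking the supremum over $z\in\Rdd$ is exactly the seminorm~\eqref{eq:Besov_def} with $q=\infty$, so $\Nrm{\op}{\dot{\cB}^s_{p,\infty}}\leq 4\,C$ uniformly along the subsequence. Since $s>1/p$, this contradicts~\eqref{eq:Besov_noregu} of Theorem~\ref{thm:noregu} applied with $q=\infty$, and therefore $A_\hbar+B_\hbar\to\infty$, which is the first assertion.

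For the refined statement, assume moreover that $\op$ lies in some $\dot{\cB}^{s_0}_{p,q}$ uniformly in $\hbar$ with $s_0>0$ and $p\geq 2$. By Proposition~\ref{prop:cv_pure_states} this forces strong convergence $\op\to\Indic\Omega$ for a measurable $\Omega\subset\Rdd$ with $\n\Omega=1$. If $B_\hbar$ stayed bounded along a subsequence, then — using lower semicontinuity under this convergence of the corresponding partial translation seminorm, which is where $p\geq2$ is used — passing to the limit would give $\sup_{w\in\Rd}\n w^{-s}\Nrm{\Indic\Omega(\cdot+(w,0))-\Indic\Omega}{L^p(\Rdd)}<\infty$, with $s>1/p$ the exponent of the statement. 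By Fubini in $\xi$ this bounds $\intd\Nrm{\Indic{\Omega_\xi}(\cdot+w)-\Indic{\Omega_\xi}}{L^p(\Rd)}^p\,\d\xi$ by $C^p\n w^{sp}$; testing along dyadic scales and using a Borel--Cantelli argument, for a.e.\ $\xi$ the slice $\Omega_\xi$ would satisfy $\Nrm{\Indic{\Omega_\xi}(\cdot+w)-\Indic{\Omega_\xi}}{L^p(\Rd)}=o(\n w^{s'})$ as $w\to0$ for a fixed $s'\in(1/p,s)$, which forces $\n{\Omega_\xi}\in\set{0,\infty}$ for a.e.\ $\xi$ by the classical maximal regularity of characteristic functions recalled in the Introduction. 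Since $\n\Omega=\intd\n{\Omega_\xi}\,\d\xi=1$, this is impossible, so $B_\hbar\to\infty$; exchanging the roles of $x$ and $\xi$ gives $A_\hbar\to\infty$ as well.

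I expect the first assertion to be, in the end, a change of variables plus the elementary fact that a first-difference Hölder bound of exponent $s$ upgrades for free to the second-difference (Zygmund-type) bound of the same exponent, once $\sfT_z$ is known to compose as a group — so no quantitative self-improvement of the modulus of continuity is required. The delicate step is the second assertion: one must turn the strong convergence provided by Proposition~\ref{prop:cv_pure_states} into a slice-wise estimate precise enough to invoke the classical $s\leq 1/p$ obstruction for characteristic functions, and one must check that the slices of $\Omega$ are genuinely nontrivial, which is exactly where the normalisation $\n\Omega=1<\infty$ enters.
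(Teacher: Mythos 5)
Your argument is essentially the proof the paper intends: the first assertion is exactly the reduction sketched just before the corollary — identity~\eqref{eq:Besov_vs_commutators}, the group property $\sfT_{(x,\xi)}=\sfT_{(x,0)}\sfT_{(0,\xi)}$ and unitarity to pass from the two coordinate-direction first differences to $\Nrm{\sfT_z\op-\op}{\L^p}\lesssim\n{z}^s$, then to the second difference, contradicting~\eqref{eq:Besov_noregu} with $q=\infty$ — and the second assertion is the fleshed-out version of the paper's pointer to Proposition~\ref{prop:cv_pure_states}. Two small corrections to that second part: Proposition~\ref{prop:cv_pure_states} does \emph{not} give $\n{\Omega}=1$ (mass may be lost in the weak limit under the translations/subsequence); it only gives $\mu\neq 0$ together with the uniform $\L^1$/$L^{p_s}$ bounds from its proof, i.e. $0<\n{\Omega}<\infty$, which is all your final contradiction actually uses, so replace the equality by this. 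Also, $p\geq 2$ is needed to invoke Proposition~\ref{prop:cv_pure_states} itself rather than for the lower-semicontinuity step. Finally, the dyadic Borel--Cantelli slicing can be shortcut: the limiting bound $\Nrm{\indic_\Omega(\cdot+(w,0))-\indic_\Omega}{L^p}\leq C\n{w}^s$ with $sp>1$ gives an $L^1$ modulus of continuity $o(\n{w})$ in the $x$-directions, hence $\nabla_x\indic_\Omega=0$ distributionally, so $\Omega$ is an $x$-cylinder of measure $0$ or $\infty$, contradicting $0<\n{\Omega}<\infty$ directly (and for $p=\infty$ the difference of indicators has norm $0$ or $1$, giving invariance at once).
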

	
\subsection{The case of Schrödinger operators}

	The Theorem~\ref{thm:noregu} is optimal as there are examples of operators for which the maximal allowed  regularity is reached. A particular class of states are the one considered in~\cite{fournais_optimal_2020}, which are of the form
	\begin{equation}\label{eq:spectral_proj}
		\op = \indic_{(-\infty,0]}(-\hbar^2\Delta + V(x)) =: \indic_{\n{\opp}^2 \leq U(x)}
	\end{equation}
	where $U = -V$ is such that there exists $\eps > 0$ and open sets $\Omega_\eps$ and $\Omega$ verifying $\overline{\Omega_\eps} \subset \Omega\subset\Rd$ such that
	\begin{system}\label{eq:conditions_u}
		&U\in C^\infty(\Omega) \cap L^1_\loc(\Omega^c)
		\\
		&U \leq -\eps \text{ on }\Omega_\eps^c.
	\end{system}
	Then it follows from~\cite[Theorem~1.2]{fournais_optimal_2020} that there exists a constant $C > 0$ independent of $\hbar\in(0,1)$ such that
	\begin{equation}\label{eq:soren_bound}
		\Nrm{\Dhx{\op}}{\L^1} \leq C  \qquad \text{ and } \qquad \Nrm{\Dhv{\op}}{\L^1} \leq C.
	\end{equation}
	More precisely, this implies the following.
	\begin{thm}\label{thm:regu}
		Let $\op = \indic_{\n{\opp}^2 \leq U(x)}$ or $\op = \indic_{\n{x}^2 \leq U(\opp)}$ with $U$ verifying Assumptions~\eqref{eq:conditions_u} and $U_+\in L^{d/2}(\Rd)$. Then $\op^2 = \op$ and for any $p\in[1,\infty]$,
		\begin{equation}\label{eq:Besov_regu}
			\op\in \cB^{1/p}_{p,\infty} \cap \cW^{1,1}.
		\end{equation}
		uniformly in $\hbar$. In particular, the Wigner transform of $\op$ verifies $f_{\op}\in B^{1/2}_{2,\infty}(\Rdd)$ uniformly in $\hbar$, hence, for any $s \in[0,1/2)$ 
		\begin{equation}\label{eq:Sobolev_regu}
			f_{\op} \in H^s(\Rdd)
		\end{equation}
		uniformly in $\hbar$.
	\end{thm}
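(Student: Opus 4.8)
The key input is the bound~\eqref{eq:soren_bound} of~\cite{fournais_optimal_2020}. The plan is to read off $\op\in\dot{\cW}^{1,1}$ from it, upgrade this to $\op\in\cB^{1/p}_{p,\infty}$ by interpolating the $\dot{\cW}^{1,1}$ control against the trivial $\L^\infty$ control of second order phase-space differences, and then transport the case $p=2$ to phase space via the Wigner transform. \emph{Step 1 (elementary properties).} That $\op^2=\op=\op^*$ is automatic since $\op$ is a spectral projection of the self-adjoint operator $-\hbar^2\Delta+V$. The hypothesis $U_+\in L^{d/2}$ serves only to bound $\Nrm\op{\L^1}$: by the Cwikel--Lieb--Rozenblum inequality the rank of $\op$, i.e. the number of non-positive eigenvalues of $-\hbar^2\Delta-U$, is $\leq C_d\,\hbar^{-d}\Nrm{U_+}{L^{d/2}}^{d/2}$, so $h^d\Tr\op\leq C$ uniformly in $\hbar$. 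Since $\Nrm\op{\L^\infty}=1$, Schatten interpolation gives $\Nrm\op{\L^p}\leq\Nrm\op{\L^1}^{1/p}\Nrm\op{\L^\infty}^{1-1/p}\leq C^{1/p}$ for all $p\in[1,\infty]$. Using $\n{\Dh\op}^2=\n{\Dhx{\op}}^2+\n{\Dhv{\op}}^2$ together with~\eqref{eq:soren_bound} one gets $\Nrm\op{\dot{\cW}^{1,1}}=\Nrm{\Dh\op}{\L^1}\leq\Nrm{\Dhx{\op}}{\L^1}+\Nrm{\Dhv{\op}}{\L^1}\leq2C$, which with $\op\in\L^1$ yields the $\cW^{1,1}$ part of~\eqref{eq:Besov_regu}.

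\emph{Step 2 (the Besov bound).} Write $\Delta^2_z\op:=\sfT_{2z}\op-2\,\sfT_z\op+\op$. Since $z\mapsto\sfT_z$ becomes a genuine group action after conjugation (the symplectic phase of the Weyl operators cancels), $\sfT_{2z}=\sfT_z\circ\sfT_z$, hence $\Delta^2_z\op=\sfT_z(\sfT_z\op-\op)-(\sfT_z\op-\op)$ and, $\sfT_z$ being conjugation by a unitary, $\Nrm{\Delta^2_z\op}{\L^1}\leq2\Nrm{\sfT_z\op-\op}{\L^1}$. Differentiating $t\mapsto\sfT_{tz}\op$ gives $\tfrac{\d}{\d t}\sfT_{tz}\op=-\sfT_{tz}(z\cdot\Dh\op)$, so $\Nrm{\sfT_z\op-\op}{\L^1}\leq\n z\,\Nrm{\n{\Dh\op}}{\L^1}\leq C\,\n z$ by~\eqref{eq:soren_bound}; on the other hand $\Nrm{\Delta^2_z\op}{\L^1}\leq4\Nrm\op{\L^1}$ and $\Nrm{\Delta^2_z\op}{\L^\infty}\leq4$. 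Combining, $\Nrm{\Delta^2_z\op}{\L^1}\leq C\min(\n z,1)$, and Hölder's inequality for Schatten norms gives $\Nrm{\Delta^2_z\op}{\L^p}\leq\Nrm{\Delta^2_z\op}{\L^1}^{1/p}\Nrm{\Delta^2_z\op}{\L^\infty}^{1-1/p}\leq C'\min(\n z,1)^{1/p}$. Plugging this into~\eqref{eq:Besov_def} with $s=1/p$, $q=\infty$ (so the weight is $\n z^{1/p}$ and the outer norm is a supremum over $z$) gives $\Nrm\op{\dot{\cB}^{1/p}_{p,\infty}}=\sup_z\n z^{-1/p}\Nrm{\Delta^2_z\op}{\L^p}\leq C'$, uniformly in $\hbar$; with Step 1 this proves $\op\in\cB^{1/p}_{p,\infty}$. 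The variant $\op=\indic_{\n x^2\leq U(\opp)}$ follows by conjugating with the semiclassical Fourier transform $\cF_\hbar$, an $\L^p$-isometry that exchanges $\Dh_x$ with $\Dh_\xi$ and conjugates $\sfT_{(x_0,\xi_0)}$ to $\sfT_{(-\xi_0,x_0)}$, hence preserves all norms above.

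\emph{Step 3 (the Wigner transform).} Take $p=2$. The Wigner transform is an isometry $\L^2\to L^2(\Rdd)$ intertwining the quantum and classical translations, $f_{\sfT_{z_0}\op}=f_\op(\,\cdot\,-z_0)$; therefore $\Nrm{\Delta^2_z\op}{\L^2}=\Nrm{f_\op(\,\cdot\,-2z)-2f_\op(\,\cdot\,-z)+f_\op}{L^2(\Rdd)}$, and the second-difference characterisation of Besov spaces (valid for smoothness in $(0,2)$) identifies $\Nrm\op{\cB^{1/2}_{2,\infty}}$ with $\Nrm{f_\op}{B^{1/2}_{2,\infty}(\Rdd)}$ up to a universal constant, so $f_\op\in B^{1/2}_{2,\infty}(\Rdd)$ uniformly in $\hbar$. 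Finally $B^{1/2}_{2,\infty}(\Rdd)\subset H^s(\Rdd)$ for $s\in[0,1/2)$: a Littlewood--Paley decomposition gives $\Nrm{P_jf_\op}{L^2}\lesssim2^{-j/2}$, whence $\sum_{j\geq0}2^{2js}\Nrm{P_jf_\op}{L^2}^2\lesssim\sum_{j\geq0}2^{(2s-1)j}<\infty$ since $2s<1$, with implied constant depending only on the uniform $B^{1/2}_{2,\infty}$ norm; this gives~\eqref{eq:Sobolev_regu}.

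The bulk of the work is Step 2, and the point needing the most care there is checking that the operations used to pass from the $\dot{\cW}^{1,1}$ bound to the difference estimate — the conjugated semigroup identity $\sfT_{2z}=\sfT_z\circ\sfT_z$, the differentiation formula for $t\mapsto\sfT_{tz}\op$, and the bound $\Nrm{\sfT_z\op-\op}{\L^1}\leq\n z\,\Nrm{\n{\Dh\op}}{\L^1}$ — are legitimate given only that $\Dh\op$ is bounded in $\L^1$. These are exactly the kinds of identities and estimates developed in~\cite{lafleche_quantum_2022}, so the matter is one of assembling and quoting rather than of a genuinely new obstruction; the remaining ingredients (Schatten-norm interpolation, unitarity of the Wigner transform, and the classical Besov embedding) are soft.
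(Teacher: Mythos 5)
Your argument is correct and follows essentially the same route as the paper: the Cwikel--Lieb--Rozenblum bound plus~\eqref{eq:soren_bound} give $\op\in\cW^{1,1}$, and the Besov bound comes from interpolating the $\dot{\cW}^{1,1}$ endpoint against the trivial $\L^\infty$ bound on second differences --- you merely unfold the paper's abstract interpolation inequality~\eqref{eq:interpolation_ineq_Besov} into an explicit pointwise-in-$z$ Hölder estimate $\Nrm{\sfT_{2z}\op-2\sfT_z\op+\op}{\L^p}\lesssim\min(\n{z},1)^{1/p}$, and prove the endpoint $\Nrm{\op}{\dot{\cB}^1_{1,\infty}}\lesssim\Nrm{\op}{\dot{\cW}^{1,1}}$ by the fundamental-theorem-of-calculus argument rather than quoting it from~\cite{lafleche_quantum_2022}. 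The transfer to $f_{\op}$ at $p=2$ via the isometry and translation-intertwining of the Wigner transform likewise matches the paper's conclusion.
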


	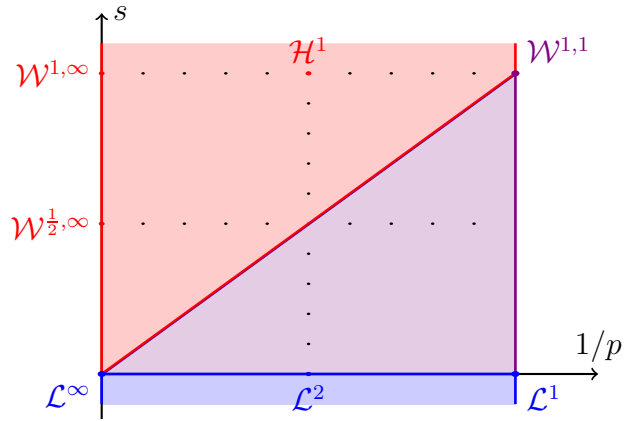
\begin{figure}[ht]
	\begin{tikzpicture}[yscale=0.4,xscale=0.55]
	
	\fill[fill=red!100, opacity=0.2]
	(0,0)--(10,10)--(10,11)--(0,11)--(0,10)
	-- cycle;
	\fill[fill=violet!100, opacity=0.2]
	(0,0)--(10,10)--(10,0)
	-- cycle;
	\fill[fill=blue!100, opacity=0.2]
	(0,0)--(10,0)--(10,-1)--(0,-1)
	-- cycle;

	
	\draw[->,line width = 0.8pt](0,0)--(12,0) node[above]{$1/p$};
	
	\draw[->,line width = 0.8pt](0,-1.5)--(0,12) node[right]{$s$};
	
	\foreach \x in {5}
		\foreach \n in {1, ..., 9}
			\fill (\x,\n) circle [radius=1.2pt];
	\foreach \n in {5,10}
		\foreach \x in {1, ..., 4, 6, 7, 8, 9}
			\fill (\x,\n) circle [radius=1.2pt];

	
	\draw [line width = 1, red] (0,11)--(0,0)--(10,10)--(10,11) node[midway, color=red, align=center]{\footnotesize };
	\draw [line width = 0.4, violet] (0,-0.05)--(10,9.95) node[midway, color=red, align=center]{\footnotesize };
	\draw [line width = 1, blue] (0,-1)--(0,0)--(10,0)--(10,-1) node[midway, right, color=blue, align=center]{\footnotesize }; 
	\draw [line width = 1, violet] (10,0)--(10,10) node[midway, above left, color=blue, align=center]{\scriptsize }; 

	
	\fill[red] (0,10) circle [radius=2pt] node[left, color=red]{$\cW^{1,\infty}$};
	\fill[red] (5,10) circle [radius=2pt] node[above, color=red]{$\cH^1$};
	\fill[violet] (10,10) circle [radius=2.8pt] node[above right, color=violet]{$\cW^{1,1}$};
	\fill[red] (0,5) circle [radius=2pt] node[left, color=red]{$\cW^{\frac{1}{2},\infty}$};
	\fill[blue] (0,0) circle [radius=2.8pt] node[below left, color=blue]{$\L^\infty$};
	\fill[blue] (5,0) circle [radius=2pt] node[below, color=blue]{$\L^2$};
	\fill[blue] (10,0) circle [radius=2.8pt] node[below right, color=blue]{$\L^1$};
	\end{tikzpicture}
	\caption{Regularity of self-adjoint projection operators in Sobolev spaces $\cW^{s,p}$. The top (red) zone corresponds to forbidden spaces, the middle (violet) zone corresponds to allowed spaces, the (blue) region below corresponds to the spaces where projection operators always are. The diagonal line corresponds to the forbidden spaces $\cW^{1/p,p}$ with $p\in(1,\infty)$, slightly smaller than the allowed spaces $\cB^{1/p}_{p,\infty}$.}\label{fig:1}
	\end{figure}
	
	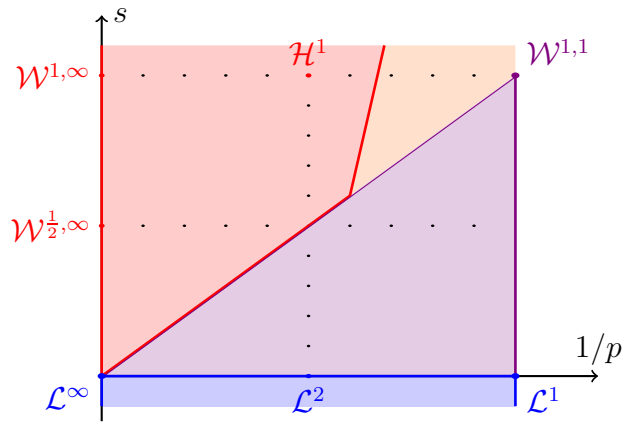
\begin{figure}[ht]
	\begin{tikzpicture}[yscale=0.4,xscale=0.55]
	
	\fill[fill=red!100, opacity=0.2]
	(0,0)--(6,6)--(41/6,11)--(0,11)--(0,10)
	-- cycle;
	\fill[fill=orangeFonce!100, opacity=0.2]
	 (6,6)--(10,10)--(10,11)--(41/6,11)
	-- cycle;
	\fill[fill=violet!100, opacity=0.2]
	(0,0)--(10,10)--(10,0)
	-- cycle;
	\fill[fill=blue!100, opacity=0.2]
	(0,0)--(10,0)--(10,-1)--(0,-1)
	-- cycle;

	
	\draw[->,line width = 0.8pt](0,0)--(12,0) node[above]{$1/p$};
	
	\draw[->,line width = 0.8pt](0,-1.5)--(0,12) node[right]{$s$};
	
	\foreach \x in {5}
		\foreach \n in {1, ..., 9}
			\fill (\x,\n) circle [radius=1.2pt];
	\foreach \n in {5,10}
		\foreach \x in {1, ..., 4, 6, 7, 8, 9}
			\fill (\x,\n) circle [radius=1.2pt];

	
	\draw [line width = 1, red] (0,11)--(0,0)--(6,6)--(41/6,11) node[midway, color=red, align=center]{\footnotesize };
	\draw [line width = 0.4, violet] (0,-0.05)--(10,9.95) node[midway, color=red, align=center]{\footnotesize };
	\draw [line width = 1, blue] (0,-1)--(0,0)--(10,0)--(10,-1) node[midway, right, color=blue, align=center]{\footnotesize }; 
	\draw [line width = 1, violet] (10,0)--(10,10) node[midway, above left, color=blue, align=center]{\scriptsize };

	
	\fill[red] (0,10) circle [radius=2pt] node[left, color=red]{$\cW^{1,\infty}$};
	\fill[red] (5,10) circle [radius=2pt] node[above, color=red]{$\cH^1$};
	\fill[violet] (10,10) circle [radius=2.8pt] node[above right, color=violet]{$\cW^{1,1}$};
	\fill[red] (0,5) circle [radius=2pt] node[left, color=red]{$\cW^{\frac{1}{2},\infty}$};
	\fill[blue] (0,0) circle [radius=2.8pt] node[below left, color=blue]{$\L^\infty$};
	\fill[blue] (5,0) circle [radius=2pt] node[below, color=blue]{$\L^2$};
	\fill[blue] (10,0) circle [radius=2.8pt] node[below right, color=blue]{$\L^1$};
	\end{tikzpicture}
	\caption{Regularity of non self-adjoint projection operators in Sobolev spaces $\cW^{s,p}$ in dimension $d=3$. In this case, Theorem~\ref{thm:noregu} needs more assumptions than a trace condition to be valid in the top-right zone. We conjecture however that the result holds there as well.}
	\end{figure}
	
	\begin{remark}
		It might seem surprising that one can find projection operators in the space $\cW^{1,1}$ while a classical characteristic function $\chi(z) = \indic_{A}(z)$ of a nonempty set $A\subset \Rdd$ is never in $W^{1,1}(\Rdd)$. But such a function $\chi$ can be in $BV(\Rdd)$, the set of distributions with bounded variation, that is the distributions whose gradient is a measure. In semiclassical analysis, the set $\L^1$ should indeed rather be interpreted as the quantum analogue of the space of measures. As an example, the Weyl quantization of the function $g_h(z) = \(2/h\)^d e^{-\n{z}^2/\hbar}$ verifies for any $h>0$, $\bNrm{\op_{g_h}}{\L^1} = 1$, but its Wigner transform converges to $\delta_0$. Similarly, the set $\cW^{1,1}$ should be rather interpreted as the quantum analogue of $BV(\Rdd)$.
	\end{remark}
	
	\begin{remark}
		It follows from the definition of the quantum gradients~\eqref{eq:quantum_gradients} that for any $f\in H^1(\Rdd)$ and any $\varphi\in C^1(\Rd)$,
		\begin{equation*}
			\Nrm{\com{\varphi(x),\op_f}}{\L^2}^2 = h^d\intdd \n{\(\varphi(x)-\varphi(y)\)\op_f(x,y)}^2 \d x\d y \leq \hbar^2 \Nrm{\varphi}{C^1(\Rd)}^2 \Nrm{\Dhv{\op_f}}{\L^2}^2.
		\end{equation*}
		Since $\sNrm{\Dhv{\op_f}}{\L^2} \leq \sNrm{\op_f}{\cW^{1,2}} = \Nrm{f}{H^1(\Rdd)}$, and on the other hand
		\begin{equation*}
			\Nrm{\com{\varphi(x),\op_f}}{\L^2} \leq 2 \Nrm{\varphi}{L^\infty(\Rd)} \sNrm{\op_f}{\L^2} = 2 \Nrm{\varphi}{L^\infty(\Rd)} \Nrm{f}{L^2(\Rdd)}
		\end{equation*}
		it follows by bilinear complex interpolation (see~\cite[Lem~28.3]{tartar_introduction_2007}) that
		\begin{equation*}
			\Nrm{\com{\varphi(x),\op_f}}{\L^2} \leq \sqrt{C_d\hbar} \Nrm{\varphi}{B^{1/2}_{\infty,1}(\Rd)} \Nrm{f}{B^{1/2}_{2,\infty}(\Rdd)}
		\end{equation*}
		for some constant $C_d$ depending only on $d$. Therefore, with the notations of \cite{deleporte_universality_2021}, applying the above inequality to $f = f_{\op}$,
		\begin{equation}\label{eq:conjecture_0}
			2\,h^d\, \mathrm{Var}(X(\varphi)) = h^d\Tr{\n{\com{\varphi(x),\op}}^2} \leq C_d\,\hbar \Nrm{\varphi}{B^{1/2}_{\infty,1}(\Rd)}^2 \Nrm{\op}{\cB^{1/2}_{2,\infty}}^2
		\end{equation}
		where we recall that $\hbar = h/(2\pi)$. Equation~\eqref{eq:conjecture_0} gives a quantitative version of Formula~(5.30) in \cite{deleporte_universality_2021} about the variance of the linear statistics of the determinantal process associated to $\op$. In this paper, Deleporte and Lambert conjecture that a lower bound of the same order should hold for any $\varphi\in C^\infty_c$ nonzero on the set $\set{U>0}$, that is, if $\op$ is of the form \eqref{eq:spectral_proj} there would exists a constant $c_\varphi>0$ independent of $\hbar$ such that
		\begin{equation*}
			\Nrm{\com{\varphi(x),\op}}{\L^2} \geq c_\varphi\,\sqrt{\hbar}.
		\end{equation*}
		Notice that it already follows from Corollary~\ref{cor:com_exp} and from Theorem~\ref{thm:regu} that for any $\eps >0$ and any $\op$ of the form~\eqref{eq:spectral_proj} with $U$ verifying \eqref{eq:conditions_u}
		\begin{equation*}
			\sup_{\xi\in\Rd} \Nrm{\com{\tfrac{e^{2i\pi\,x\cdot\xi}}{\n{\xi}^{1/2-\eps}},\op}}{\L^2} \geq G(\hbar)\, \hbar^{1/2-\eps}
		\end{equation*}
		for some function $G$ such that $G(\hbar) \underset{\hbar\to 0}{\rightarrow} \infty$.
	\end{remark}
	
	Thanks to the well-known Weyl asymptotic formulas, we can have more precise results for spectral projections of the form~\eqref{eq:spectral_proj} (see e.g. \cite{fournais_semi-classical_2018}). It implies the following theorem which indicates that the regularity of Theorem~\ref{thm:regu} is not generic, and that there are projection operators that are less regular. More precisely, for any $s>0$ there are projections operators that are not bounded in $\cW^{s,p}$ uniformly in $\hbar$.
	\begin{thm}\label{thm:regu_noregu}
		Let $\op_\hbar = \indic_{\n{\opp}^2\leq U(x)}$ with $U_+\in L^\infty(\Rd)\cap L^{d/2}(\Rd)$. Then the Husimi transform of $\op_\hbar$ converges weakly in $L^p(\Rdd)$ to $\indic_{\n{\xi}^2\leq U(x)}$ for all $p\in(0,1)$ when $\hbar\to 0$. Moreover, when $U$ satisfies Assumption~\eqref{eq:conditions_u} with $\Omega$ bounded, then for any $p\in[1,\infty)$ and $s\in[0,1/p)$,
		\begin{align}\label{eq:CV_Sobolev_Husimi}
			\tilde{f}_{\op_\hbar} &\underset{\hbar\to 0}{\to} \indic_{\n{\xi}^2 \leq U(x)} \quad \text{ in } W^{s,p}(\Rdd)
			\\\label{eq:CV_Sobolev_Wigner}
			f_{\op_\hbar} &\underset{\hbar\to 0}{\to} \indic_{\n{\xi}^2 \leq U(x)} \quad \text{ in } W^{s,p}(\Rdd) \text{ with } s\in \(\tfrac{1}{p'},\tfrac{1}{2} + 2d\(\tfrac{1}{p}-\tfrac{1}{2}\)\)
		\end{align}
		where $p' = \frac{p}{p-1}$. There is also convergence in $L^p(\Rdd)$ for any $p\in[1,\infty)$.
		
		However, there exists functions $U$ with $\sqrt{U}\in L^\infty(\Rd)\cap L^d(\Rd)\cap C^\alpha(\Rd)$  with $\alpha\in(0,1)$ such that for all $s > \alpha/p$, $\indic_{\n{\xi}^2 \leq U(x)} \notin B^s_{p,q}(\Rdd)$ and so such that
		\begin{equation}\label{eq:existence_of_irregularity}
			\Nrm{\op_\hbar}{\cB^s_{p,q}} \underset{\hbar\to 0}{\to} \infty.
		\end{equation}
		
	\end{thm}

\section{The explicit example of the Harmonic oscillator.}

	An example where one can do explicit computations is the case of non-interacting fermions in a harmonic trap, see also for example~\cite{benedikter_effective_2022}. Consider the Hamiltonian given by the harmonic oscillator
	\begin{equation*}
		\sfH_N = \dG(H) = \sum_{n=1}^N H_n \quad \text{ where } \quad H = \n{\opp}^2 + \n{x}^2
	\end{equation*}
	and where each $H_n$ is the one particle Hamiltonian $H$ acting on the $n$-th particle. It is well-known that the eigenvalues of $H$ are given of the form $\(2\,n+1\)\hbar$ with $n\in\N$, with eigenvectors $\psi_n$ given by Hermite functions. The ground state of $\sfH_N$ is given by a Slater determinant formed by the $N$ first eigenvectors of the one-body Hamiltonian. Assuming for simplicity that the number of particles is such that $N = \binom{d+n}{d}$ with $n\in\N$, then its one-particle density operator can be written
	\begin{equation*}
		\op = \sum_{\n{\alpha}_1 \leq n} \ket{\psi_\alpha}\bra{\psi_\alpha}
	\end{equation*}
	where $\alpha = (\alpha_1,\dots\alpha_d)\in\N_0^d$ is a multi-index, with $\n{\alpha}_1 = \alpha_1 + \dots + \alpha_d$ and $\psi_\alpha = \psi_{\alpha_1}\otimes\dots\otimes\psi_{\alpha_d}$. It verifies $\Tr{\op} = N = h^{-d}$ and $0\leq \op\leq 1$. Following similar computations as the one done for example in \cite{benedikter_effective_2022}, it holds
	\begin{equation}\label{eq:gradient_ground_state}
		\n{\Dh_{\xi_1}{\op}}^p = \frac{1}{(2\hbar)^{p/2}} \sum_{\n{\alpha}_1 = n}  \(\alpha_1+1\)^{p/2}\big(\ket{\psi_{\alpha}}\bra{\psi_{\alpha}} + \ket{\psi_{\alpha+e_1}}\bra{\psi_{\alpha+e_1}}\,\big).
	\end{equation}
	From this we deduce that for any $p\in[1,\infty]$, there exists $C_p$ independent of $\hbar$ such that
	\begin{equation*}
		\Nrm{\Dh_{\xi_1}\op}{\L^p} = \frac{C_p}{h^{1/p'}}
	\end{equation*}
	where $p' = \frac{p}{p-1}$ is the H\"older conjugate of $p$. The constant $C_p$ can be explicitly computed and verifies $C_p \leq (2d)^{1/p}\sqrt{\pi}(d!)^{\frac{1}{d}\(\frac{1}{2}-\frac{1}{p}\)}$ when $p\leq 2$ and $C_p \leq 2^{1/p}\sqrt{\pi}(d!)^{\frac{1}{d}\(\frac{1}{2}-\frac{1}{p}\)}$ when $p\geq 2$. Moreover,
	\begin{align*}
		\Nrm{\Dh_{\xi_1}\op}{\L^1} &\leq \frac{2d\sqrt{\pi}}{\sqrt{d!}}, & \Nrm{\Dh_{\xi_1}\op}{\L^2} &= \frac{1}{\sqrt{\hbar}}, & \Nrm{\Dh_{\xi_1}\op}{\L^\infty} &\leq \frac{\sqrt{(d!)^{1/d}\,\pi}}{h}.
	\end{align*}
	The same estimates hold by replacing $\Dhv{}$ by $\Dhx{}$. In particular, $\op\in\cW^{1,1}\cap \L^\infty$ and so in the same way as in the proof of Theorem~\ref{thm:regu}, $\op\in \cB^{1/p}_{p,\infty}$ for any $p\in[1,\infty]$.
	
	\begin{proof}
		 Following similar computations as in~\cite{benedikter_effective_2022}, one can compute $\com{x_1,\op}$ using the fact that $x_1 = \frac{a+a^*}{2}$ where $a^* = x-i\opp$ and $a = x+i\opp$ are the creation and annihilation operators. This yields
		\begin{equation*}
			\com{x_1,\op} = \frac{\sqrt{\hbar}}{\sqrt{2}} \sum_{\n{\tilde{\alpha}}_1 \leq n} \sqrt{a+1}\,\big(\ket{\psi_{a+1}}\bra{\psi_a} - \ket{\psi_a}\bra{\psi_{a+1}}\big)\otimes \ket{\psi_{\tilde{\alpha}}}\bra{\psi_{\tilde{\alpha}}}
		\end{equation*}
		where $\tilde{\alpha} = (0,\alpha_2,\dots,\alpha_d)$ and $a = n-\n{\tilde{\alpha}}_1$. By taking $\alpha = (\alpha_1, \alpha_2,\dots,\alpha_d)$ with $\alpha_1 = a$, this can be equivalently written
		\begin{equation*}
			\Dh_{\xi_1}{\op} = \frac{i}{\sqrt{2\hbar}} \sum_{\n{\alpha}_1 = n}  \sqrt{\alpha_1+1}\,\big(\ket{\psi_{\alpha_1}}\bra{\psi_{\alpha_1+1}} - \ket{\psi_{\alpha_1+1}}\bra{\psi_{\alpha_1}}\big)\otimes \ket{\psi_{\tilde{\alpha}}}\bra{\psi_{\tilde{\alpha}}}.
		\end{equation*}
		Notice that $\psi_{\alpha_1}\otimes\psi_{\tilde{\alpha}} = \psi_{\alpha}$ and $\psi_{\alpha_1+1}\otimes\psi_{\tilde{\alpha}} = \psi_{\alpha+e_1}$ with $e_1=(1,0,\dots,0)$. Hence, taking the square of the above operator and using the fact that the family of $\psi_n$ is orthonormal leads to
		\begin{equation*}
			\n{\Dh_{\xi_1}{\op}}^2 = \frac{1}{2\hbar} \sum_{\n{\alpha}_1 = n}  \(\alpha_1+1\)\big(\ket{\psi_{\alpha}}\bra{\psi_{\alpha}} + \ket{\psi_{\alpha+e_1}}\bra{\psi_{\alpha+e_1}}\,\big).
		\end{equation*}
		As this is a diagonal operator, we deduce that the more general Formula~\eqref{eq:gradient_ground_state} holds. Taking the trace then yields
		\begin{equation*}
			h^d\Tr{\n{\Dh_{\xi_1}{\op}}^p} = \frac{2\,h^d}{(2\hbar)^{p/2}} \sum_{\n{\alpha}_1 = n}  \(\alpha_1+1\)^{p/2} = \frac{2\,h^d}{(2\hbar)^{p/2}} \sum_{k = 1}^{n+1} \binom{d+n-k-1}{d-2} k^{p/2}
		\end{equation*}
		and we deduce the result using the fact that $\binom{d+n}{d} = N = h^{-d}$ and $(n+1)^d \leq N\,d! \leq (n+d)^d$.
	\end{proof}

\section{The case of Schr\"odinger operators}

	Knowing Equation~\eqref{eq:soren_bound}, the core of the proof of Theorem~\ref{thm:regu} is nothing more than the use of an interpolation inequality for quantum Besov spaces.
	
	\begin{proof}[Proof of Theorem~\ref{thm:regu}]
		It follows from the Cwikel--Lieb--Rozenblum inequality \cite{cwikel_weak_1977} that 
		\begin{equation}\label{eq:CLR_ineq}
			\Nrm{\op}{\L^1} \leq C \Nrm{U_+}{L^{d/2}(\Rd)}
		\end{equation}
		where $U_+ = \max(U,0)$, and so in particular the above assumptions~\eqref{eq:conditions_u} imply that $\op\in\L^1$. Combined with Equation~\eqref{eq:soren_bound}, we deduce that $\op\in\cW^{1,1}$. Since $\op^2=\op$, it follows that $\Nrm{\op}{\L^\infty} \leq 1$. Now notice that if $\theta\in[0,1]$, $(s_0,s_1)\in[0,1]^2$ and $(p_0,p_1,q_0,q_1)\in[1,\infty]^4$, then it follows from the definition~\eqref{eq:Besov_def} of Besov norms and H\"older's inequality for the Lebesgue and Schatten norms that for any $s_\theta$, $p_\theta$ and $q_\theta$ such that
		\begin{equation*}
			s_\theta = \(1-\theta\)s_0 + \theta s_1, \qquad \frac{1}{p_\theta} = \frac{1-\theta}{p_0} + \frac{\theta}{p_1}, \qquad \frac{1}{q_\theta} = \frac{1-\theta}{q_0} + \frac{\theta}{q_1}
		\end{equation*}
		the following interpolation inequality holds 
		\begin{equation}\label{eq:interpolation_ineq_Besov}
			\Nrm{\op}{\dot{\cB}^{s_\theta}_{p_\theta,q_\theta}} \leq \Nrm{\op}{\dot{\cB}^{s_0}_{p_0,q_0}}^{1-\theta} \Nrm{\op}{\dot{\cB}^{s_1}_{p_1,q_1}}^\theta.
		\end{equation}
		We know from~\cite[Equation~(31)]{lafleche_quantum_2022} that for any $p\in[1,\infty]$,
		\begin{equation}\label{eq:comparison_besov_sobolev}
			\Nrm{\op}{\dot{\cB}^1_{p,\infty}} \leq 2 \Nrm{\op}{\dot{\cW}^{1,p}}.
		\end{equation}
		On the other hand, for any $p\in[1,\infty]$, the triangle inequality and the fact that the translation operators $\sfT_z$ are unitary, and so preserve the Schatten norms, yields
		\begin{equation*}
			\Nrm{\op}{\dot{\cB}^0_{\infty,\infty}} = \BNrm{\Nrm{\sfT_{2z}\op - 2\,\sfT_z\op + \op}{\L^\infty}}{L^\infty(\Rdd)} \leq 4 \Nrm{\op}{\L^\infty}.
		\end{equation*}
		Therefore, we deduce that
		\begin{equation*}
			\Nrm{\op}{\dot{\cB}^{1/p}_{p,\infty}} \leq \Nrm{\op}{\dot{\cB}^{0}_{\infty,\infty}}^{1-\theta} \Nrm{\op}{\dot{\cB}^{1}_{1,\infty}}^\theta \leq 2^{2-\theta} \Nrm{\op}{\L^\infty}^{1-\theta} \Nrm{\op}{\dot{\cW}^{1,1}}^\theta.
		\end{equation*}
		Together with the fact that $\Nrm{\op}{\L^p}$ is bounded uniformly in $\hbar$, it implies that $\Nrm{\op}{\cB^{1/p}_{p,\infty}}$ is bounded uniformly in $\hbar$, which finishes the proof of Equation~\eqref{eq:Besov_regu}. The interpolation Inequality~\eqref{eq:interpolation_ineq_Besov} implies now that $\Nrm{\op}{\cW^{s,2}} \leq C \Nrm{\op}{\cB^{1/p}_{p,\infty}}$. But by the fact that the Wigner transform is an isometry from $\L^2$ to $L^2(\Rdd)$ and the integral characterization of $H^s$, it is not difficult to see (see also \cite{lafleche_quantum_2022}) that $\Nrm{f_{\op}}{H^s} = C \Nrm{\op}{\cW^{s,2}}$, from which Equation~\eqref{eq:Sobolev_regu} follows.
	\end{proof}

\section{Projection operators converge to characteristic functions}

	In this section, we prove Theorem~\ref{thm:noregu}. We first introduce some tools that will be useful to our analysis. As in~\cite{werner_quantum_1984, lafleche_quantum_2022}, we define the semiclassical convolution as the weak integral
	\begin{equation}\label{eq:semiclassical_convolution_def}
		f \star \op = \op \star f := \intdd f(z)\,\sfT_z \op\d z.
	\end{equation}
	We will be interested by the convolution by the Gaussian function defined for any $z\in\Rdd$ by
	\begin{equation*}
		g_h(z) = \(2/h\)^d e^{-\n{z}^2/\hbar},
	\end{equation*}
	and we will use the notations
	\begin{equation*}
		\tilde{f} := g_h * f, \quad \text{ and } \quad \tildop := g_h \star \op.
	\end{equation*}
	In particular, the Husimi transform of an operator $\op$ is nothing but
	\begin{equation*}
		\tilde{f}_{\op} = g_h * f_{\op} = f_{\tildop}
	\end{equation*}
	while the Wick quantization of a measure $f$ (also sometimes called superposition of coherent sates, Töplitz operator or anti-Wick quantization) can be written
	\begin{equation*}
		\tildop_f = g_h\star \op_f = \op_{\tilde{f}} = f \star \op_{g_h}.
	\end{equation*}
	Young's inequality also holds for the semiclassical convolution (see e.g.~\cite{werner_quantum_1984,lafleche_quantum_2022}) and as a particular case we obtain the well known bounds for the Wick quantization and the Husimi transform
	\begin{equation}\label{eq:Toplitz_Husimi_bounds}
		\Nrm{\tildop_f}{\L^p} \leq \Nrm{f}{L^p(\Rdd)}, \quad \text{ and } \quad \Nrm{\tilde{f}_{\op}}{L^p(\Rdd)} \leq \Nrm{\op}{\L^p}.
	\end{equation}
	More generally, it is not difficult to deduce that the same inequality holds for quantum Sobolev and Besov spaces (see e.g. \cite[inequalities~(41) and~(47)]{lafleche_quantum_2022}), that is
	\begin{equation}\label{eq:Toplitz_Husimi_bounds_besov}
		\Nrm{\tildop_f}{\dot{\cB}^s_{p,q}} \leq \Nrm{f}{\dot{B}^s_{p,q}(\Rdd)}, \quad \text{ and } \quad \Nrm{\tilde{f}_{\op}}{\dot{B}^s_{p,q}(\Rdd)} \leq \Nrm{\op}{\dot{\cB}^s_{p,q}}.
	\end{equation}
	The main step in the proof of our main theorem is the following proposition.

	\begin{prop}\label{prop:cv_pure_states}
		Let $s\in(0,1]$, $p\in[2,\infty]$, $q\in[1,\infty]$ and $\op$ be (a sequence of) operators verifying $\op^2 = \op$, $h^d\Tr{\op}=1$ and $\op\in \dot{\cB}^s_{p,q}$ uniformly in $\hbar$. Then there exists a sequence $(\hbar_n)_{n\in\N}$ converging to $0$ and a sequence $(z_n)_{n\in\N}\in(\Rdd)^\N$ such that the sequence of translated Wigner transforms $\tilde{f}_{\op}(\cdot+z_n)$ for $\hbar = \hbar_n$ converges to $\mu : \Rdd\to \R$ different from $0$ and verifying
		\begin{equation*}
			\mu^2 = \mu \in B^s_{p,q}(\Rdd).
		\end{equation*}
	\end{prop}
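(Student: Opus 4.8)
The plan is to transfer the problem to the Husimi transform $\tilde{f}_{\op}=g_h*f_{\op}=f_{\tildop}$, where $\tildop=g_h\star\op$ is the (anti-)Wick quantization attached to $\op$: unlike $f_{\op}$, it is nonnegative whenever $\op$ is a self-adjoint projection (it is, up to normalization, $z\mapsto\langle\eta_z,\op\,\eta_z\rangle$ for the coherent state $\eta_z$), and it obeys the contraction bounds \eqref{eq:Toplitz_Husimi_bounds}--\eqref{eq:Toplitz_Husimi_bounds_besov}. Since positivity is the reason to pass to it, I first present the self-adjoint case (the one relevant to Figure~\ref{fig:1}); the general idempotents are handled by carrying along an a priori $\L^p$ bound exactly as in Theorem~\ref{thm:noregu}. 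In the self-adjoint case $\op$ is an orthogonal projection of rank $N$ with $h^dN=1$, so $\Nrm{\op}{\L^1}=\Nrm{\op}{\L^2}=\Nrm{\op}{\L^\infty}=1$ and $\Nrm{\op}{\L^p}=1$; hence $0\le\tilde{f}_{\op}\le1$ pointwise and $\Nrm{\tilde{f}_{\op}}{L^p}\le1$, which together with $\Nrm{\tilde{f}_{\op}}{\dot{B}^s_{p,q}}\le\Nrm{\op}{\dot{\cB}^s_{p,q}}=O(1)$ makes $\set{\tilde{f}_{\op}}_\hbar$ bounded in $B^s_{p,q}(\Rdd)$.

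The quantitative core is the assertion $\int_{\Rdd}\tilde{f}_{\op}^2\to1$. By the Weyl trace identity $\int\tilde{f}_{\op}^2=h^d\Tr{\tildop^2}$, while $h^d\Tr{\op^2}=h^d\Tr{\op}=1$, so it suffices that $h^d\n{\Tr{\tildop^2}-\Tr{\op^2}}=h^d\n{\Tr{(\tildop-\op)(\tildop+\op)}}\le\Nrm{\tildop-\op}{\L^2}\Nrm{\tildop+\op}{\L^2}\to0$. Here $\Nrm{\tildop+\op}{\L^2}\le2\Nrm{\op}{\L^2}=2$ because $g_h\star{}$ is an $\L^2$-contraction, while $\tildop-\op=\intdd g_h(z)(\sfT_z\op-\op)\d z$ and the $\sfT_z$ are unitary, so $\Nrm{\tildop-\op}{\L^2}\le\intdd g_h(z)\Nrm{\sfT_z\op-\op}{\L^2}\d z$. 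Interpolating the trivial bound $\Nrm{\sfT_z\op-\op}{\L^1}\le2$ with the first-difference Besov estimate $\Nrm{\sfT_z\op-\op}{\L^p}\lesssim\n{z}^{s'}\Nrm{\op}{\dot{\cB}^s_{p,q}}$ (for any $s'<\min(s,1)$; the endpoint $s=1$ costs only a power loss) yields $\Nrm{\sfT_z\op-\op}{\L^2}\lesssim\n{z}^{s'(1-\theta)}$ with $\theta=\tfrac{p-2}{2(p-1)}\in[0,1)$, and since $g_h$ lives at scale $\sqrt{\hbar}$, $\Nrm{\tildop-\op}{\L^2}\lesssim\hbar^{s'(1-\theta)/2}\to0$. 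Two consequences: the nonnegative function $\tilde{f}_{\op}(1-\tilde{f}_{\op})=\tilde{f}_{\op}-\tilde{f}_{\op}^2$ has integral $1-\int\tilde{f}_{\op}^2\to0$, i.e. $\tilde{f}_{\op}(1-\tilde{f}_{\op})\to0$ in $L^1(\Rdd)$; and $\Nrm{\tilde{f}_{\op}}{L^\infty}\ge\int\tilde{f}_{\op}^2\to1$ since $\tilde{f}_{\op}\ge0$ and $\int\tilde{f}_{\op}=1$.

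Next, a concentration argument. Since $\set{\tilde{f}_{\op}}_\hbar$ is bounded in $B^s_{p,q}(\Rdd)$ and $B^s_{p,q}(\Rdd)$ embeds compactly into $L^2_{\loc}(\Rdd)$ (as $p\ge2>\tfrac{2d}{d+s}$, the critical exponent exceeds $2$), it suffices to produce $R,\delta>0$, a subsequence $\hbar_n\to0$, and points $z_n$ with $\int_{B_R(z_n)}\tilde{f}_{\op}\ge\delta$: then along a further subsequence $\tilde{f}_{\op}(\cdot+z_n)\to\mu$ in $L^2_{\loc}$ and a.e., so $\int_{B_R(0)}\mu\ge\delta>0$, $0\le\mu\le1$, $\mu\in L^p$, and $\mu\in\dot{B}^s_{p,q}$ by lower semicontinuity of the Besov seminorm under local convergence. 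Moreover $L^2_{\loc}$ convergence gives $\tilde{f}_{\op}(\cdot+z_n)-\tilde{f}_{\op}(\cdot+z_n)^2\to\mu-\mu^2$ in $L^1_{\loc}$, while Step~2 forces this quantity to $0$ in $L^1(\Rdd)$; hence $\mu^2=\mu$. So, granting the non-escape of mass, we get $\mu\neq0$ with $\mu^2=\mu\in B^s_{p,q}$, as required.

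It remains — and this is the main obstacle — to produce the points $z_n$, i.e. to prevent the mass of $\tilde{f}_{\op}$ from escaping to infinity or dissolving into an ever thinner profile. The facts of Step~2 are the needed input, but they must be combined with the hypothesis on $\op$ rather than on $\tilde{f}_{\op}$ alone. From $\Nrm{\tilde{f}_{\op}}{L^\infty}\to1$ and $\tilde{f}_{\op}(1-\tilde{f}_{\op})\to0$ in $L^1$, $\tilde{f}_{\op}$ is asymptotically the indicator of its super-level set $A_\hbar=\set{\tilde{f}_{\op}>1/2}$, a set of measure tending to $1$; one then uses the uniform bound $\Nrm{\op}{\dot{\cB}^s_{p,q}}=O(1)$ to bound the \emph{perimeter} of $A_\hbar$ — because Schatten norms, unlike the classical $\dot{B}^s_{p,\infty}$ norm of $\tilde{f}_{\op}$, add up contributions from phase-space-separated pieces, a bounded $\dot{\cB}^s_{p,q}$ norm of the projection forces $A_\hbar$ into boundedly many components of volume bounded below, and translating onto one of them yields $z_n$ with $\int_{B_R(z_n)}\tilde{f}_{\op}\ge\delta$. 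Everything else — Step~2, where the hypothesis $p\ge2$ enters through the interpolation down to the Hilbert--Schmidt norm, and the extraction and lower-semicontinuity of Step~3 — is routine; it is this control of the geometry of $A_\hbar$ from the quantum Besov bound that is delicate.
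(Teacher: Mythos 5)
Your overall architecture (pass to the Husimi transform, use the Besov bound for compactness up to translations, identify the limit as an idempotent) matches the paper's, and your Step 2--3 mechanism for the idempotency is a genuinely different and attractive shortcut in the self-adjoint case: instead of the paper's operator-product estimate $\bNrm{\tfrac{\tildop_f\tildop_g+\tildop_g\tildop_f}{2}-\tildop_{fg}}{\L^{p/2}}\lesssim\hbar\Nrm{\nabla f}{L^p}\Nrm{\nabla g}{L^p}$ (from \cite{chong_l2_2022}) combined with a weak--strong comparison, you exploit $0\le\tilde f_{\op}\le1$, the Hilbert--Schmidt computation $\int\tilde f_{\op}^2=h^d\Tr{\tildop^2}\to h^d\Tr{\op}=1$, and Fatou on $\tilde f_{\op}(1-\tilde f_{\op})\ge0$ to get $\mu^2=\mu$. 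That part is sound (granted the first-difference Besov estimate you invoke, which is indeed available).

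The genuine gap is exactly where you flag "the main obstacle": the production of the translations $z_n$. Your proposed mechanism --- that the uniform $\dot{\cB}^s_{p,q}$ bound controls the "perimeter" of $A_\hbar=\set{\tilde f_{\op}>1/2}$ and that a bounded fractional perimeter "forces $A_\hbar$ into boundedly many components of volume bounded below" --- is not a proof and the central claim is false as stated: a set of measure $1$ with bounded (fractional) perimeter can have infinitely many connected components, and components of arbitrarily small volume (e.g.\ a union of balls of radii $2^{-k}$ plus one large ball), so no conclusion about "finitely many components of volume bounded below" can be drawn, and nothing you wrote rules out that $\sup_z\int_{B_R(z)}\tilde f_{\op}$ tends to $0$. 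What is true, and what the paper actually proves, is a quantitative Lions-type vanishing lemma: choosing $z_\hbar$ nearly maximizing $\int_Q\n{T_z\tilde f_{\op}}$ and summing a Gagliardo--Nirenberg inequality over a lattice of unit cubes gives $\Nrm{\tilde f_{\op}}{L^r}^r\le C\,\big(\sup_z\int_Q\n{T_z\tilde f_{\op}}\big)^{(1-\theta)r}\Nrm{\tilde f_{\op}}{W^{\sigma,p}}^{p}$ for a suitable $r$, so that vanishing of the local masses would force $\Nrm{\tilde f_{\op}}{L^r}\to0$, contradicting (in your setting) $\Nrm{\tilde f_{\op}}{L^2}\to1$ via interpolation with $\Nrm{\tilde f_{\op}}{L^1}=1$. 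You already have all the inputs for this argument; it should replace the geometric claim. A second, lesser issue: the proposition as stated does not assume self-adjointness, and your reduction of the general idempotent case to "carry along an $\L^p$ bound as in Theorem~\ref{thm:noregu}" does not work as described, since your key steps use $\Nrm{\op}{\L^\infty}=1$, $0\le\tilde f_{\op}\le1$ and the positivity of $\tilde f_{\op}(1-\tilde f_{\op})$, all of which fail for non-self-adjoint projections; the paper's proof is arranged precisely so as never to use positivity (it gets $\mu\ne0$ from a lower bound on $\Nrm{\op}{\L^r}$ derived from $\op^2=\op$ and H\"older, and $\mu^2=\mu$ from the operator-product estimate).
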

	
	\begin{remark}
		The result is still true if there is no shifting sequence, i.e. if $z_n = 0$ for all $n\in\N$, except the fact that in this case it may happen that $\mu = 0$.
	\end{remark}

	\begin{proof}
		We take a discrete sequence of values for $\hbar$, $(\hbar_n)_{n\in\N}$, such that $\hbar_n\to 0$ when $n\to \infty$. We will then also take subsequences if necessary, but to simplify we will not write the $n$ dependency and just write $\hbar\to 0$. By assumption, there exists a constant $\dot{\cD}_s$ independent of $\hbar$ such that
		\begin{equation}\label{eq:Ds_bound}
			\Nrm{\op}{\dot{\cB}^s_{p,q}} \leq \dot{\cD}_s.
		\end{equation}
		By the quantum Sobolev inequalities \cite{lafleche_quantum_2022}, this implies that $\Nrm{\op}{\L^{p_s}} \leq \CS_{s,p}\dot{\cD}_s$ for some constant $\CS_{s,p}$ independent of $\hbar$ and with $\frac{1}{p_s} := \frac{1}{p} - \frac{s}{2d}$ if $p < \frac{2d}{s}$ and $p_s = \infty$ if $p > \frac{2d}{s}$ (and any $p_s\geq p$, that will be taken sufficiently large, if $p= \frac{2d}{s}$). By the fact that $\op^2=\op$ and by H\"older's inequality for Schatten norms, for any $r\geq 2$,
		\begin{equation*}
			1 = h^d\Tr{\op} \leq \Nrm{\op}{\L^1} = \Nrm{\op^2}{\L^1} \leq \Nrm{\op}{\L^2}^2 \leq \Nrm{\op}{\L^r}^{r'/2} \Nrm{\op}{\L^1}^{1-r'/2},
		\end{equation*}
		where $r'$ is the H\"older conjugate of $r$. Since $p_s\geq p\geq 2$, it implies in particular that
		\begin{equation}\label{eq:L1_bound}
			\Nrm{\op}{\L^1} \leq \Nrm{\op}{\L^{p_s}} \leq \CS_{s,p}\dot{\cD}_s.
		\end{equation}
		Combining the two above equations implies that for any $r\geq 2$,
		\begin{equation}\label{eq:lower_bound_Lr}
			\Nrm{\op}{\L^r} \geq (\CS_{s,p}\dot{\cD}_s)^{1-2/r'}.
		\end{equation}
		For the Husimi transform, by Formula~\eqref{eq:Toplitz_Husimi_bounds}, we deduce from Inequality~\eqref{eq:Ds_bound} and Inequality~\eqref{eq:L1_bound} the following bounds
		\begin{equation}\label{eq:pure_husimi_Lp}
			\Nrm{\tilde{f}_{\op}}{L^1(\Rdd)} \leq \CS_{s,p}\dot{\cD}_s \qquad \text{ and } \quad
			\Nrm{\tilde{f}_{\op}}{L^{p_s}(\Rdd)} \leq \CS_{s,p}\dot{\cD}_s.
		\end{equation}
		Moreover, since $p\in[1,p_s]$, by equations~\eqref{eq:Toplitz_Husimi_bounds} and~\eqref{eq:Toplitz_Husimi_bounds_besov} and H\"older's inequality, it holds
		\begin{equation}\label{eq:pure_husimi_regu}
			\Nrm{\tilde{f}_{\op}}{B^s_{p,q}(\Rdd)} \leq \Nrm{\op}{\L^p} + \Nrm{\op}{\dot{\cB}^s_{p,q}} \leq \(\CS_{s,p}+1\)\dot{\cD}_s =: \cD_s.
		\end{equation}
		These bound will allow us to extract weakly convergent subsequences. However, nothing prevents these sequences to converge weakly to $0$. Hence we will first use the ideas of the concentration-compactness principle~\cite{lions_concentration-compactness_1984, lions_concentration-compactness_1984-1} to prevent this. We mainly need to prevent the mass to escape at infinity, hence we follow the ideas of \cite[Lemma~I.1]{lions_concentration-compactness_1984-1} (see also \cite{lewin_describing_2010}) and look at all the sequences $T_z\tilde{f}_{\op} := \tilde{f}_{\op}(\cdot - z)$ and select the one with approximately the more mass in a given set. More precisely, by a diagonal argument, we can choose a sequence $z_\hbar$ such that
		\begin{equation*}
			\lim_{\hbar\to 0} \int_Q \n{T_{z_\hbar}\tilde{f}_{\op}} = \limsup_{\hbar\to 0} \(\sup_{z\in\Rdd} \int_Q \n{T_z\tilde{f}_{\op}}\) =: M_Q(\op)
		\end{equation*}
		where $Q$ is the unit cube $Q = \set{z\in\Rdd, \n{z}_\infty \leq 1/2}$. This new sequence of functions $T_{z_\hbar}\tilde{f}_{\op}$ verifies the same bounds~\eqref{eq:pure_husimi_Lp} and~\eqref{eq:pure_husimi_regu} as $\tilde{f}_{\op}$.
		Hence $T_{z_\hbar}\tilde{f}_{\op}\in B^s_{p,q}\cap L^1 \cap L^{p_s}$ uniformly in $\hbar$ and so, up to a subsequence, converges weakly in $B^s_{p,q}(\Rdd)$ and strongly in $L^p_{\loc}(\Rdd)$ to a function $\mu \in B^s_{p,q}\cap L^1 \cap L^{p_s}$ verifying 
		\begin{equation*}
			\Nrm{\mu}{B^s_{p,q}(\Rdd)} \leq \cD_s \quad\text{ and }\quad \Nrm{\mu}{L^1(Q)} = M_Q(\op).
		\end{equation*}
		Let $\sigma \in (0,s)$ and $\frac{1}{p_\sigma} := \frac{1}{p} - \frac{\sigma}{2d}$. By the Gagliardo--Nirenberg interpolation inequalities, for any $r\in(1,p_\sigma)$ and $\theta\in(0,1)$ such that $\frac{1}{r} = \frac{\theta}{p_\sigma} + 1-\theta$, there exists a constant $C>0$ independent of $\hbar$ and $\op$ such that
		\begin{equation*}
			\intdd \n{\tilde{f}_{\op}}^r = \sum_{k\in\Z^d} \int_Q \n{T_k \tilde{f}_{\op}}^r \leq C \sum_{k\in\Z^d} \(\int_Q \n{T_k \tilde{f}_{\op}}\)^{\(1-\theta\)r} \Nrm{\tilde{f}_{\op}}{W^{\sigma,p}(Q)}^{\theta\,r}.
		\end{equation*}
		In particular, one can choose $r$ verifying $\theta\,r = p$ by taking $r = p\(1+\frac{\sigma}{2d}\)$. Hence, it yields
		\begin{equation*}
			\intdd \n{\tilde{f}_{\op}}^r \leq C \sup_{k\in\Z^d} \(\int_Q \n{T_k \tilde{f}_{\op}}\)^{\(1-\theta\)r} \Nrm{\tilde{f}_{\op}}{W^{\sigma,p}(\Rdd)}^p
		\end{equation*}
		and so since $B^s_{p,q} \subset W^{\sigma,p}$, it follows that
		\begin{equation*}
			\Nrm{\ttildop}{\L^r} \leq \Nrm{\tilde{f}_{\op}}{L^r(\Rdd)} \leq C \(\sup_{z\in\Z^d}\int_Q \n{T_z \tilde{f}_{\op}}\)^\frac{\sigma}{2d+\sigma} \cD_s^\frac{\sigma}{2d+\sigma}
		\end{equation*}
		where we used Inequality~\eqref{eq:Toplitz_Husimi_bounds} with $\ttildop = \tildop_{\tilde{f}_{\op}}$ to get the first inequality. On the other hand, we know from \cite[Inequality~(42)]{lafleche_quantum_2022} that $\sNrm{\op-\ttildop}{\L^p} \leq C\,\hbar^s\Nrm{\op}{\cB^s_{p,q}}$. Since $r\geq p$, taking into account the $h$ appearing in the definition of the $\L^p$ norms, the inclusions between Schatten norms yields $\sNrm{\op-\ttildop}{\L^r} \leq C\,\hbar^{s-2d\(\frac{1}{p}-\frac{1}{r}\)}\Nrm{\op}{\cB^s_{p,q}}$, and so 
		\begin{equation*}
			\Nrm{\op}{\L^r} - C\,\hbar^{s-\frac{2d\sigma}{\(2d+s\)p}} \,\cD_s \leq C \(\sup_{z\in\Z^d}\int_Q \n{T_z \tilde{f}_{\op}}\)^\frac{\sigma}{2d+\sigma} \cD_s^\frac{\sigma}{2d+\sigma}.
		\end{equation*}
		Noticing that the exponent of $\hbar$ appearing in the above equation is positive and that since $r\geq p \geq 2$ we can use Inequality~\eqref{eq:lower_bound_Lr}, we can take $\hbar \to 0$ to get
		\begin{equation*}
			(\CS_{s,p}\dot{\cD}_s)^{1-2/r'} \leq C\, \cD_s^\frac{\sigma}{2d+\sigma} \Nrm{\mu}{L^1(Q)}^\frac{\sigma}{2d+\sigma}.
		\end{equation*}
		In particular, $\mu \neq 0$.
		
		Now it remains to prove that $\mu^2$ is also limit of the sequence $T_{z_\hbar}\tilde{f}_{\op}$. For that, we will use the following inequality proved in \cite[Lemma~3.1]{chong_l2_2023}. For any $p\geq 2$, it holds
		\begin{equation*}
			\Nrm{\frac{\tildop_f\tildop_g+\tildop_f\tildop_g}{2}-\tildop_{fg}}{\L^{p/2}} \leq 2^{d+1}\, d\,\hbar \Nrm{\nabla f}{L^p(\Rdd)}\Nrm{\nabla g}{L^p(\Rdd)},
		\end{equation*}
		The above inequality, together with the fact that by H\"older's inequality
		\begin{equation*}
			\Nrm{\frac{\tildop_f\tildop_g+\tildop_f\tildop_g}{2}-\tildop_{fg}}{\L^{p/2}} \leq  \Nrm{\tildop_f}{\L^p}\Nrm{\tildop_g}{\L^p} + \Nrm{fg}{L^{p/2}(\Rdd)} \leq 2\Nrm{f}{L^p(\Rdd)}\Nrm{g}{L^p(\Rdd)}
		\end{equation*}
		leads by bilinear real interpolation (see e.g.~\cite[Lemma~28.3]{tartar_introduction_2007}) and taking $f = g = \mu$ to
		\begin{equation*}
			\Nrm{\tildop_\mu^2-\tildop_{\mu^2}}{\L^{p/2}} \leq C\,\hbar^\sigma \Nrm{\mu}{B^\sigma_{p,\tilde{q}}(\Rdd)}^2 \leq C\,\hbar^\sigma \Nrm{\mu}{B^s_{p,q}(\Rdd)}^2,
		\end{equation*}
		where $\tildop_\mu^2 = (\tildop_\mu)^2$, $\tilde{q} \geq 2$ and we used the fact that $(L^p, W^{1,p})_{s,\tilde{q}} = B^{s}_{p,\tilde{q}}$ (see e.g. \cite[Section~2.4.2]{triebel_interpolation_1978}) and the continuous embedding $B^s_{p,q} \subset B^\sigma_{p,\tilde{q}}$ for any $\sigma < s$. That is, squaring an operator is an operation close to squaring a function. By the properties of the Husimi transform, we deduce that
		\begin{equation*}
			\Nrm{\tilde{f}_{\tildop_\mu^2} - \tilde{f}_{\tildop_{\mu^2}}}{L^r(\Rdd)} \leq C\,\hbar^\sigma \,\cD_s^2.
		\end{equation*}
		Noticing that $\tilde{f}_{\tildop_{\mu^2}} = \tilde{\tilde{f}}_{\op_{\mu^2}} = g_{2h} * f_{\op_{\mu^2}} = g_{2h} * \mu^2$, we deduce that $g_h * f_{\tildop_\mu^2} - g_h * \mu^2$ converges to $0$ in $L^r(\Rdd)$. On another side, since $g_h$ is an approximation of the identity, $g_h*\mu^2 - \mu^2$ also converges to $0$ in $L^r(\Rdd)$. Therefore, strongly in $L^r(\Rdd)$, the following convergence holds true
		\begin{equation}\label{eq:strong_cv}
			\tilde{f}_{\tildop_\mu^2} \underset{h \to 0}{\longrightarrow} \mu^2.
		\end{equation}
		Now we want to prove that $\tilde{f}_{\tildop_\mu^2}$ is close to $\mu$. To this end we use the fact that $\op_h = \op_h^2$ where $\op_h = \sfT_{z_h}\op$, and write
		\begin{equation}\label{eq:triangle_equality}
			\mu - \tilde{f}_{\tildop_\mu^2} = \(\mu - \tilde{f}_{\op_h}\) + \(\tilde{f}_{\op_h^2} - \tilde{f}_{\ttildop_h^2}\) + \(\tilde{f}_{\ttildop_h^2} - \tilde{f}_{\tildop_\mu^2}\).
		\end{equation}
		We already know that the first term on the right-hand-side converges to $0$ in $L^p_{\loc}(\Rdd)$.  To bound the second term, we use the Husimi transform bound~\eqref{eq:Toplitz_Husimi_bounds} to write
		\begin{equation*}
			\Nrm{\tilde{f}_{\op_h^2} - \tilde{f}_{\ttildop_h^2}}{L^r(\Rdd)} \leq \Nrm{\op^2 - \ttildop_h^2}{\L^r}.
		\end{equation*}
		To control this term, it is useful to remove the squares using the fact that
		\begin{equation}\label{eq:dif_square_identity}
			\op_h^2 - \ttildop_h^2 = \frac{1}{2} \(\(\op_h-\ttildop_h\)\(\op_h+\ttildop_h\) + \(\op_h+\ttildop_h\)\(\op_h-\ttildop_h\)\).
		\end{equation}
		By the semiclassical Young convolution inequality (see \cite{werner_quantum_1984, lafleche_quantum_2022}) and the fact that $\op_h^2 = \op_h$, it holds $\Nrm{\ttildop_h}{\L^p} \leq \Nrm{\op_h}{\L^p} \leq 1$. Therefore, it follows from~\eqref{eq:dif_square_identity} and H\"older's inequality that
		\begin{equation*}
			\Nrm{\op_h^2 - \ttildop_h^2}{\L^r} \leq \(\Nrm{\op_h}{\L^p} + \Nrm{\ttildop_h}{\L^p}\) \Nrm{\op_h-\ttildop_h}{\L^p} \leq 2 \Nrm{\op_h-\ttildop_h}{\L^p}.
		\end{equation*}
		Since by \cite{lafleche_quantum_2022}, we know that $\Nrm{\op_h-\ttildop_h}{\L^p} \leq C\,\hbar^s \Nrm{\op_h}{\cB^s_{p,q}}$, it yields
		\begin{equation}\label{eq:double_convolution_error}
			\Nrm{\op_h^2 - \ttildop_h^2}{\L^r} \leq C\,\hbar^s \,\cD_s.
		\end{equation}
		It remains to treat the last term on the right-hand-side of Identity~\eqref{eq:triangle_equality}. The idea is similar to the second term, but since there is only convergence of $\tilde{f}_{\op}$ to $\mu$ weakly or locally, we pass to the weak topology. Take $\varphi\in L^{r'}(\Rdd)$ a test function. Then using again Identity~\eqref{eq:dif_square_identity} with $\op_h$ replaced by $\tildop_\mu$ yields
		\begin{multline*}
			\intdd \(\tilde{f}_{\ttildop_h^2} - \tilde{f}_{\tildop_\mu^2}\)\varphi = \intdd \(f_{\ttildop_h^2} - f_{\tildop_\mu^2}\) \tilde{\varphi} = \tfrac{h^d}{2} \Tr{\(\tildop_{\mu}-\ttildop_h\)\opnu + \opnu^*\(\tildop_{\mu}-\ttildop_h\)}
			\\
			= \Re{\intdd \(\tilde\mu - \tilde{\tilde{f}}_{\op_h}\) f_{\opnu}} = \Re{\intdd \(\mu - \tilde{f}_{\op_h}\) \tilde{f}_{\opnu}}
		\end{multline*}
		where $\opnu = \(\tildop_{\mu}+\ttildop_h\)\tildop_{\varphi}$, and $\tilde{f}_{\opnu}$ is bounded in $L^{p'}(\Rdd)$ uniformly in $\hbar$ since by H\"older's inequality with $\frac{1}{p'} = \frac{1}{p} + \frac{1}{r'}$
		\begin{equation*}
			\Nrm{\tilde{f}_{\opnu}}{L^{p'}(\Rdd)} \leq \Nrm{\(\tildop_{\mu}+\ttildop_h\)\tildop_{\varphi}}{\L^{p'}} \leq \(\Nrm{\mu}{L^p(\Rdd)}+\Nrm{\op_h}{\L^p}\) \Nrm{\tildop_{\varphi}}{\L^{r'}} \leq 2\Nrm{\varphi}{L^{r'}(\Rdd)}.
		\end{equation*}
		Taking an approximation of $f_{\opnu}$ by compactly supported functions and using the fact that $\mu- \tilde{f}_{\op_h}$ converges strongly to $0$ in $L^p_{\loc}(\Rdd)$ leads to the fact that $\tilde{f}_{\ttildop_h^2} - \tilde{f}_{\tildop_\mu^2}$ converges weakly to $0$ in $L^r(\Rdd)$. Recalling Inequality~\eqref{eq:double_convolution_error} and coming back to Equation~\eqref{eq:triangle_equality}, we proved that
		\begin{equation*}
			\tilde{f}_{\tildop_\mu^2} \underset{h\to 0}{\rightharpoonup} \mu^2
		\end{equation*}
		weakly in $L^r(\Rdd)$. Together with Equation~\eqref{eq:strong_cv}, this proves that $\mu = \mu^2$.
	\end{proof}
	
	\begin{proof}[Proof of Theorem~\ref{thm:noregu}]
		Assume $s > 1/p$, or $s= 1/p$ and $q < \infty$, and let $\op = (\op_\hbar)_{\hbar\in(0,1)}$ be a sequence of operators verifying the assumption of the theorem and such that the norm $\Nrm{\op_\hbar}{\dot{\cB}^s_{p,q}}$ does not converge to $\infty$. Then there exists a subsequence of $\op$ bounded in $\cB^s_{p,q}$ uniformly in $\hbar$. We now write $\op = (\op_\hbar)_{\hbar\in(0,1)}$ this subsequence.
	
		\step{Case 1} Assume first that $p\in [2,\infty]$. Then by Proposition~\ref{prop:cv_pure_states}, there exists a function $\mu \in B^s_{p,q}(\Rdd)$ such that $\mu^2 = \mu \geq 0$, that is $\mu$ is the characteristic function of some set. But this is known to be false: the characteristic function of a set cannot have such regularity, see for example~\cite{sickel_regularity_2021}. This proves the result in the case $p\geq 2$.
		
		\step{Case 2} Now assume that $p\in[1,2)$ and let $\theta= \frac{1}{2s}$, $r = \frac{q}{\theta}$ and $p_0 = \frac{2s-1}{sp-1} \,p$. Since $p < 2$, we deduce that $s \in (1/2,1)$ and so $\theta\in(1/2,1)$ and $r\in(q,\infty)$, and since $p\in[1,2)$ and $s\in(1/2,1)$ and $sp\geq 1$, we get that $p_0 \in (p,\infty]$ (with the convention that $p_0 = \infty$ if $sp=1$). One can rewrite the definition of $p_0$ and $r$ as follows
		\begin{equation*}
			\frac{1}{2} = \frac{\theta}{p} + \frac{1-\theta}{p_0}, \qquad \frac{1}{r} = \frac{\theta}{q}.
		\end{equation*}
		Hence it follows from Inequality~\eqref{eq:interpolation_ineq_Besov} that
		\begin{equation*}
			\Nrm{\op}{\dot{\cB}^{1/2}_{2,r}} \leq  \Nrm{\op}{\dot{\cB}^{s}_{p,q}}^\theta \(4\Nrm{\op}{\L^{p_0}}\)^{1-\theta}.
		\end{equation*}
		If $\op$ is a self-adjoint operator, since $\op^2 = \op$, $\Nrm{\op}{\L^p} = \Nrm{\op}{\L^1}^{1/p}$ is independent of $\hbar$ by assumption. Hence if $\Nrm{\op}{\dot{\cB}^{s}_{p,q}}$ is bounded uniformly in $\hbar$, so will be $\Nrm{\op}{\dot{\cB}^{1/2}_{2,r}}$, contradicting the case $p=2$ already proved in the first part of this proof. This finishes the proof of Formula~\eqref{eq:Besov_noregu}. In the case when $s\in(0,1)$, then Equation~\eqref{eq:Sobolev_noregu} is an immediate consequence of Formula~\eqref{eq:Besov_noregu} and the fact that $\gamma_{s,p}^{1/p} \Nrm{\op}{\dot{\cB}^s_{p,p}} \leq 2 \Nrm{\op}{\dot{\cW}^{s,p}}$ by the triangle inequality. In the case when $s=1$, then Formula~\eqref{eq:Sobolev_noregu} follows from Formula~\eqref{eq:Besov_noregu} and Inequality~\eqref{eq:comparison_besov_sobolev}.
		
		In the case when $\op\in\dot{\cB}^s_{p,q}$ is not a self-adjoint operator but we know that $\op\in\L^{2+\eps}$ uniformly in $\hbar$ for some $\eps>0$, then by the interpolation inequality~\eqref{eq:interpolation_ineq_Besov}, we deduce that $\op\in\dot{\cB}^\alpha_{\beta,\gamma}$ for some $\alpha>0$ and some $\beta>2$. Hence by  Proposition~\ref{prop:cv_pure_states}, up to a shifting sequence and a subsequence, the Husimi transform of $\op$ converges to some indicator function $0 \neq \mu = \mu^2 \in B^\alpha_{\beta,\gamma}(\Rdd)$. This regularity is not forbidden for characteristic functions if $\alpha$ is small enough. However, since the Husimi transform of $\op$ is also in $B^s_{p,q}(\Rdd)$ uniformly in $\hbar$, one also deduces $\mu\in B^s_{p,q}(\Rdd)$, which is not possible for characteristic functions. This proves the theorem in this case.
		
		In the particular case $p \leq \frac{2\,d}{d+s}$, then the fact that $\op\in\L^{2+\eps}$ uniformly in $\hbar$ follows directly from the fact that $\op\in\dot{\cB}^s_{p,q}$ uniformly in $\hbar$ by the quantum Sobolev inequalities~\cite{lafleche_quantum_2022}, and so the proof follows as in the previous paragraph.
	\end{proof}

\section{Proof of the Weyl law in Sobolev spaces}

	\begin{proof}[Proof of Theorem~\ref{thm:regu_noregu}]
		The beginning of the proof is classical (see e.g.~\cite{fournais_semi-classical_2018}). Notice first that
		\begin{equation}\label{eq:kinetic_energy_bound}
			h^d \Tr{\op \n{\opp}^2} = h^d \Tr{\op \(\n{\opp}^2-U(x)\)} + h^d \Tr{\op \,U(x)} \leq M_0\,\Nrm{U_+}{L^\infty(\Rd)}
		\end{equation}
		where $M_h = h^d\Tr{\op}$, from which it follows that $h^d \Tr{\op \n{\opp}^2}$ is bounded uniformly in $\hbar$. Since $0\leq \op\leq 1$ and $\op$ is bounded in $\L^1$ uniformly in $\hbar$ by Inequality~\eqref{eq:CLR_ineq}, up to a subsequence, the Husimi transform of $\op$ converges weakly (for example in $L^p(\Rdd)$ for $p\in(1,\infty)$) to some function $f$ verifying $0\leq f\leq 1$. The Wigner transform of $\op$ then converges weakly to the same limit (see \cite{lions_sur_1993}). On the other hand, the classical asymptotic formula for the eigenvalue counting function of the Schr\"odinger operator \cite{martin_bound_1972, tamura_asymptotic_1974, simon_analysis_1976, cwikel_weak_1977} yields
		\begin{equation*}
			\intdd f_{\op}(x,\xi) \d x\d \xi = h^d\Tr{\op} \underset{\hbar\to 0}{\to} \intdd \indic_{\n{\xi}^2\leq U(x)}(x,\xi) \d x\d \xi = M_0
		\end{equation*}
		while from the Weyl law for the energy of the Schrödinger operator (see e.g. \cite{lieb_inequalities_1976, hundertmark_new_2000}), it holds
		\begin{multline*}
			\intdd f_{\op} \(U(x) - \n{\xi}^2\) = h^d \Tr{\op \(U(x) - \n{\opp}^2\)}  = h^d \Tr{\(U(x) - \n{\opp}^2\)_+}
			\\
			\underset{\hbar\to 0}{\to} \iintd \(U(x) - \n{\xi}^2\)_+ \d x\d \xi = \max_{g \in G} \iintd \(U(x) - \n{\xi}^2\) g(x,\xi) \d x\d \xi,
		\end{multline*}
		where $G = \set{g:\Rdd\to \R | 0\leq g\leq 1 \text{ and } \intdd g = M_0}$. Hence, by the bathtub principle, we deduce that $f = \indic_{\n{\xi}^2 \leq U(x)}$.
		
		If $\op$ is bounded in $\cB^s_{p,q}$ uniformly in $\hbar$, then by Equation~\eqref{eq:Toplitz_Husimi_bounds_besov}, its Husimi transform is bounded in $B^s_{p,q}(\Rdd)$ uniformly in $\hbar$ and so $f$ will be in $B^s_{p,q}(\Rdd)$ as well. If $\sqrt{U}\in C^\alpha$ with with $\alpha\in(0,1)$, then $\indic_{\n{\xi}^2\leq U(x)}\in B^{\alpha/p}_{p,\infty}(\Rdd)$. However, there exist examples of functions $U\in C^\alpha$ such that $\indic_{\n{\xi}^2\leq U(x)} \notin B^s_{p,q}(\Rdd)$ for all $s > \alpha/p$ (see e.g. \cite[Section~4.4]{ sickel_regularity_2021} or \cite{triebel_fractals_2010}). For these functions, we therefore deduce that $\op$ can never be bounded in $\cB^s_{p,q}$ uniformly in $\hbar$, proving Equation~ \eqref{eq:existence_of_irregularity}.
		
		\step{Step 2: convergence of the Husimi transform} In the particular case when $U$ verifies Assumption~\eqref{eq:conditions_u} with $\Omega$ bounded, then by Theorem~\ref{thm:regu} and Equation~\eqref{eq:Toplitz_Husimi_bounds_besov}, we deduce that the Husimi transform of $\op$ is bounded uniformly in $\hbar$ in $B^{1/p}_{p,\infty}(\Rdd)$ and so converges weakly in $B^{1/p}_{p,\infty}(\Rdd)$ to~$f$.
		
		To get strong convergence in $L^p$, it remains to use that the moments of $\op$ are bounded. Indeed, the Husimi transform of $\op$ verifies $0\leq \tilde{f}_{\op} \leq 1$ and
		\begin{equation*}
			\Nrm{\tilde{f}_{\op} \n{\xi}}{L^2(\Rdd)}^2 \leq \intdd g_h*f_{\op} \n{\xi}^2\d x\d\xi = h^d\Tr{\op\n{\opp}^2} + d\,\hbar,
		\end{equation*}
		which is bounded uniformly in $\hbar\in(0,1]$ by Inequality~\eqref{eq:kinetic_energy_bound}. On the other hand, it follows from Agmon estimates \cite{agmon_lectures_1982, agmon_bounds_1985} that there exist positive constants $c$ and $C$ independent of $\hbar$ and $N=h^{-d}$ such that for all eigenvalues $(\psi_j)_{j\in\set{1,\dots,N}}$ of $\op$,
		\begin{equation*}
			\Nrm{e^{c\n{x}}\,\psi_j(x)}{L^2(\Rd)} \leq C_\eps,
		\end{equation*}
		where $C_\eps$ depends on the $\eps$ appearing in Assumption \eqref{eq:conditions_u}. Since
		\begin{equation*}
			\op = \sum_{j=1}^N \ket{\psi_j}\bra{\psi_j},
		\end{equation*}
		where by Inequality~\eqref{eq:CLR_ineq}, $Nh^d\leq C \Nrm{U_+}{L^{d/2}(\Rd)}$, this implies that
		\begin{equation*}
			\Nrm{\tilde{f}_{\op} \n{x}}{L^2(\Rdd)}^2 -d\hbar \leq h^d \Tr{\op \n{x}^2} = h^d\sum_{j=1}^N \Nrm{\n{x}\psi_j(x)}{L^2(\Rd)}^2 \leq C_\eps \Nrm{U_+}{L^{d/2}(\Rd)}.
		\end{equation*}
		From these $\n{\xi}$ and $\n{x}$ moments bounds and the uniform boundedness of $\tilde{f}_{\op}$ in $H^s(\Rdd)$ for $s<1/2$, it follows that $\tilde{f}_{\op}$ converges strongly to $f$ in $L^2(\Rdd)$ by the classical Riesz--Fréchet--Kolmogorov criterion of compactness in $L^p$ (see e.g. \cite[Theorem~IV.25]{brezis_analyse_2005}). By interpolation and the uniform bound of $\tilde{f}_{\op}$ in $L^1(\Rdd)$ and $L^\infty(\Rdd)$, this implies the strong convergence in $L^p$ for any $p\in(1,\infty)$. More precisely, for $p\in[2,\infty)$,
		\begin{equation*}
			\Nrm{\tilde{f}_{\op}-f}{L^p(\Rdd)} \leq \Nrm{\tilde{f}_{\op}-f}{L^2(\Rdd)}^{2/p} \Nrm{\tilde{f}_{\op}-f}{L^\infty(\Rdd)}^{1-2/p} \leq \Nrm{\tilde{f}_{\op}-f}{L^2(\Rdd)}^{2/p}
		\end{equation*}
		since $-1\leq \tilde{f}_{\op}-f \leq 1$, and similarly for $p\in(1,2)$ by replacing $L^\infty$ by $L^1$. Strong convergence also holds in $L^1(\Rdd)$ following the same scheme of proof as for $L^2(\Rdd)$, but using the boundedness of $\Nrm{\tilde{f}_{\op}}{W^{1,1}(\Rdd)}$, $\Nrm{\tilde{f}_{\op}\n{x}^2}{L^1(\Rdd)}$ and $\Nrm{\tilde{f}_{\op}\n{\xi}^2}{L^1(\Rdd)}$ uniformly in $\hbar$.
		
		Convergence in $W^{s,p}(\Rdd)$ for all $s\in (0,1/p)$ follows similarly from the fact that these are interpolation spaces between $L^p(\Rdd)$ and $B^{1/p}_{p,\infty}(\Rdd)$, proving Equation~\eqref{eq:CV_Sobolev_Husimi}.
		
		\step{Step 3: convergence of the Wigner transform} For the Wigner transform $f_{\op}$, convergence in $L^2(\Rdd)$ follows from the fact that for any $s<1/2$,
		\begin{equation*}
			\Nrm{f_{\op} - \tilde{f}_{\op}}{L^2(\Rdd)} \leq C\,h^s \Nrm{f_{\op}}{H^s(\Rdd)},
		\end{equation*}
		which converges to $0$ when $h\to 0$ by Equation~\eqref{eq:Sobolev_regu}. Hence the convergence of the Husimi transform implies the convergence of the Wigner transform. For $p\geq 2$, one can use the classical Sobolev embeddings $H^{s}(\Rdd) \to W^{\sigma,p}(\Rdd)$ for $\frac{1}{2}-\frac{s-\sigma}{2d} \leq \frac{1}{p} \leq \frac{1}{2}$ to get that the Wigner transforms are bounded uniformly in $\hbar$ in $W^{\sigma,p}(\Rdd)$ for $2\leq p < \frac{4d}{2\(d+\sigma\)-1}$, and the strong convergence in $W^{t,p}(\Rdd)$ for $t\in[0,\sigma)$ follows again by an interpolation argument, proving Equation~\eqref{eq:CV_Sobolev_Wigner}.
		
		Finally notice that the boundedness of moments in $x$ and $\opp$ of the operator $\op$ allows to consider lower values of $p$. Indeed, with the notation $X = x+y/2$ and $Y= x-y/2$, so that $x=X+Y$, the definition of the Wigner transform together with the multinomial formula yield for any $n\in2\N$,
		\begin{align*}
			f_{\op} \n{x}^n &= \intd e^{-i\,y\cdot\xi/\hbar} \n{X+Y}^n \op(X,Y)\d y
			\\
			&= \intd e^{-i\,y\cdot\xi/\hbar} \(\n{X}^2+\n{Y}^2+2\,X\cdot Y\)^{n/2} \op(X,Y)\d y
			\\
			&= \sum_{\n{\alpha+\beta}_1= n} C_{\alpha,\beta} \intd e^{-i\,y\cdot\xi/\hbar} X^\alpha\, \op(X,Y) \,Y^{\beta} \d y = \sum_{\n{\alpha+\beta}_1= n} C_{\alpha,\beta}\, f_{x^\alpha\op\,x^\beta}
		\end{align*}
		where the sum is taken over all the multiindices $(\alpha,\beta)\in \N^d\times \N^d$ such that
		\begin{equation*}
			\n{\alpha+\beta}_1 := \sum_{j=1}^d \alpha_j + \beta_j = n
		\end{equation*}
		and where $x^\alpha$ means $x_1^{\alpha_1}\dots x_d^{\alpha_d}$. The constants $C_{\alpha,\beta}$ verify that their sum is $\(2\sqrt d\)^n$, as can be seen by taking $X=Y=(1,\dots,1)$. Hence, taking the $L^2(\Rdd)$ norm and using the triangle inequality and the fact that the Wigner trnasform is an isometry from $\L^2$ to $L^2(\Rdd)$ yields
		\begin{equation*}
			\Nrm{f_{\op} \n{x}^n}{L^2(\Rdd)} \leq \sum_{\n{\alpha+\beta}_1= n} C_{\alpha,\beta}\, \Nrm{x^\alpha\op\,x^\beta}{\L^2} \leq \(2\sqrt d\)^n \sup_{\n{\alpha+\beta}_1=n} \Nrm{x^\alpha\op\,x^\beta}{\L^2}
		\end{equation*}
		By the boundedness of the operator of multiplication by $\frac{x^\alpha}{\n{x}^{\n{\alpha}_1}}$,
		\begin{equation*}
			\Nrm{x^\alpha\op\,x^\beta}{\L^2} \leq \Nrm{\n{x}^{\n{\alpha}_1}\op \n{x}^{\n{\beta}_1}}{\L^2} \leq \Nrm{\op\n{x}^{\n{\alpha+\beta}_1}}{\L^2}
		\end{equation*}
		where the last inequality follows from \cite[Inequality~(56)]{lafleche_strong_2023} (which follows mainly from H\"older's inequality for Schatten norms and the Araki--Lieb--Thirring inequality~\cite{araki_inequality_1990}). Hence
		\begin{equation*}
			\Nrm{f_{\op} \n{x}^n}{L^2(\Rdd)}^2 \leq \(4d\)^n \Nrm{\op\n{x}^n}{\L^2}^2 = \(4d\)^n h^d\Tr{\op \n{x}^{2n}}
		\end{equation*}
		where we used the that fact that $\op^2=\op$. This last expression is bounded uniformly with respect to $\hbar$ by Agmon estimates. A similar proof replacing $x$ by $\xi$ yields
		\begin{equation*}
			\Nrm{f_{\op} \n{\xi}^n}{L^2(\Rdd)}^2 \leq \(4d\)^nh^d\Tr{\op \n{\opp}^{2n}} = \(4d\)^n h^d \sum_{j=1}^N \Inprod{\psi_j}{\n{\opp}^{2n}\psi_j}
		\end{equation*}
		where the $\psi_j$ verify $\n{\opp}^2 \psi_j = \(U(x)+\lambda_j\)\psi_j$ for some $\lambda_j\leq 0$. In particular, if $n$ is odd, then $\(U(x)-\mu_j\)^n \leq U_+^n(x)$, so that
		\begin{equation*}
			\Inprod{\psi_j}{\n{\opp}^{2n}\psi_j} \leq \Inprod{\psi_j}{U_+^n\psi_j} \leq \Nrm{U_+}{L^\infty(\Rd)}^n
		\end{equation*}
		which for the Wigner transform yields
		\begin{align*}
			\Nrm{f_{\op} \n{\xi}^n}{L^2(\Rdd)}^2 &\leq \(4d\)^nh^d\Tr{\op \n{\opp}^{2n}} \leq \(4d\)^n h^d \sum_{j=1}^N \Nrm{U_+}{L^\infty(\Rd)}^n
			\\
			&\leq C_d \(4d\)^n \Nrm{U_+}{L^{d/2}(\Rd)} \Nrm{U_+}{L^\infty(\Rd)}^n.
		\end{align*}
		From these weighted $L^2$ estimates, $L^p$ estimates for $p<2$ follow by H\"older's inequality since whenever $n > 2d$
		\begin{equation}\label{eq:moments_L2_to_L1}
			\Nrm{f_{\op}}{L^1(\Rdd)} \leq C_{d,n} \Nrm{f_{\op} \(1+\n{x}^n+\n{\xi}^n\)}{L^2(\Rdd)}
		\end{equation}
		with $C_{d,n}$ independent of $\hbar$. The boundedness in $L^1(\Rdd)$ and in $H^s(\Rdd)$ for $s<1/2$ together with the convergence in $L^2(\Rdd)$ imply by interpolation the convergence in $W^{s,p}(\Rdd)$ for any $s < 1/p'$ with $p\in(1,2)$.
		
		Notice that similarly as for the above moments bounds~\eqref{eq:moments_L2_to_L1}, one has more generally whenever $n > 2d + k$
		\begin{equation}\label{eq:moments_L1}
			\Nrm{f_{\op}\(1+\n{x}^k+\n{\xi}^k\)}{L^1(\Rdd)} \leq C_{d,n} \Nrm{f_{\op} \(1+\n{x}^n+\n{\xi}^n\)}{L^2(\Rdd)}
		\end{equation}
		Together with the convergence in $L^2(\Rdd)$ of the Wigner transform, which implies the convergence in $L^1$ on any compact of $\Rdd$, this gives the convergence in $L^1(\Rdd)$.
	\end{proof}

\medskip
\paragraph{\bf Acknowledgment.} This project has received funding from the European Research Council (ERC) under the European Union’s Horizon 2020 research and innovation program (grant agreement No 865711).


\renewcommand{\bibname}{\centerline{Bibliography}}
\bibliographystyle{abbrv} 
\bibliography{Vlasov}

\end{document}